\newcommand{\Hbox}[1]{\parbox[c][#1][s]{0pt}{}}
\newcommand{\keyword}[1]{\textrm{\textbf{#1}}}
\newtheorem{lemma}{Lemma}
\newtheorem{claim}{Claim}
\newtheorem{theorem}{Theorem}
\newcommand{\nit}{\texttt{SV}}
\newcommand{\afu}[1]{\texttt{AFU$_#1$}}
\newcommand{\afuIII}{\texttt{AFU$_3$}}
\newcommand{\afuIV}{\texttt{AFU$_4$}}
\newcommand{\afuk}{\texttt{AFU$_k$}}
\newcommand{\siIII}{\texttt{FFF$_3$}}
\newcommand{\siIV}{\texttt{FFF$_4$}}
\newcommand{\siV}{\texttt{FFF$_5$}}
\newcommand{\siVI}{\texttt{FFF$_6$}}
\newcommand{\siVII}{\texttt{FFF$_7$}}
\newcommand{\sik}{\texttt{FFF$_k$}}
\newcommand{\sic}[1]{\texttt{cFFF$_#1$}}
\newcommand{\sick}{\texttt{cFFF$_k$}}
\def\full#1{{}}
\begin{document}
\title{Clique counting in MapReduce:\\theory and experiments}

\author{
  Irene Finocchi, Marco Finocchi, Emanuele G. Fusco\\ Computer Science Department\\{\it Sapienza} University of Rome\\{\tt\{finocchi, fusco\}@di.uniroma1.it}~~~~~~~{\tt mrcfinocchi@gmail.com}
}

\date{}

\maketitle
\thispagestyle{empty}

\begin{abstract}
We tackle the problem of counting the number of $k$-cliques in large-scale graphs, for any constant $k \ge 3$. Clique counting is essential in a variety of applications, among which social network analysis. Due to its computationally intensive nature, we settle for parallel solutions in the MapReduce framework, which has become in the last few years a {\em de facto} standard for batch processing of massive data sets. We give both theoretical and experimental contributions.

On the theory side, we design the first exact scalable algorithm for counting (and listing) $k$-cliques. Our algorithm uses $O(m^{3/2})$ total space and $O(m^{k/2})$ work, where $m$ is the number of graph edges. This matches the best-known bounds for triangle listing when $k=3$ and is work-optimal in the worst case for any $k$, while keeping the communication cost independent of $k$. We also design a sampling-based estimator that can dramatically reduce the running time and space requirements of the exact approach, while providing very accurate solutions with high probability.

We then assess the effectiveness of different clique counting approaches through an extensive experimental analysis over the Amazon EC2 platform, considering both our algorithms and their state-of-the-art competitors. The experimental results clearly highlight the algorithm of choice in different scenarios and prove our exact approach to be the most effective when the number of $k$-cliques is large, gracefully scaling to non-trivial values of $k$ even on clusters of small/medium size. Our approximation algorithm achieves extremely accurate estimates and large speedups, especially on the toughest instances for the exact algorithms. As a side effect, our study also sheds light on the number of $k$-cliques of several real-world graphs, mainly social networks, and on its growth rate as a function of $k$.

\vspace{8mm}
\noindent {\bf Keywords: }{Clique listing, graph algorithms, MapReduce, parallel algorithms, experimental algorithmics}
\end{abstract}

\newpage

\setcounter{page}{1}

\section{Introduction}

The problem of  counting small subgraphs with specific structural properties in large-scale networks has gathered a lot of interest from the research community during the last few years. Counting -- and possibly listing -- all instances of triangles, cycles, cliques, and other different structures is indeed a fundamental tool for uncovering the properties of complex networks~\cite{milo2002}, with wide-ranging applications that include spam and anomaly detection~\cite{BCDLB06,GKT05}, social network analysis~\cite{HR05,RU12}, and the discovery of patterns in biological networks~\cite{saha2010}. 

Since many interesting graphs have by themselves really large sizes, developing analysis algorithms that scale gracefully on such instances is rather challenging. Designing efficient sequential algorithms is often not enough: even assuming that the input graphs could fit into the memory of commodity hardware,  subgraph counting is a computationally intensive problem and the running times of sequential algorithms can easily become unacceptable in practice. 
To overcome these issues, many recent works have focused on speeding up the computation by exploiting parallelism (e.g., using MapReduce~\cite{DG04}), by working in external memory models~\cite{Vitter08}, or by settling for approximate -- instead of exact -- answers (as done, e.g., in data streaming~\cite{muthu05}). 

In this paper we tackle the problem of counting the number of $k$-cliques in large-scale graphs, for any constant $k\ge 3$. This is a fundamental problem in social network analysis (see, e.g.,~\cite{HR05} -- Chapter 11) and algorithms that produce a census of all cliques are also included in widely-used software packages, such as UCINET~\cite{BEF02}. We present simple and scalable algorithms suitable to be implemented in the MapReduce framework~\cite{DG04} that, together with its open source implementation Hadoop~\cite{hadoop}, has become a {\em de facto} standard for programming massively distributed systems  both in industry and academia. MapReduce indeed offers programmers the possibility to easily run their code on large clusters while neglecting any issues related to scheduling, synchronization, communication, and error detection (that are automatically handled by the system). Computational models for analyzing MapReduce algorithms are described in~\cite{KSV10,GSZ11,PPRSU12}.

$k$-clique counting is a natural generalization of triangle counting, where $k=3$: this is the simplest, non-trivial version of the problem and has been widely studied in the literature. Appendix A describes related works in a variety of models of computation. Focusing on MapReduce, two different exact algorithms for listing triangles have been proposed by Suri and Vassilvitskii~\cite{SV11} and validated on real world datasets. One of these algorithms has been recently extended to count arbitrary subgraphs by Afrati {\em et al.}~\cite{AFU13}, casting it into a general framework based on the computation of multiway joins. A sampling-based randomized approach whose output estimate is strongly concentrated around the true number of triangles (under mild conditions) has been finally described by Pagh and Tsourakakis in~\cite{PT12}. We remark that subgraph counting becomes more and more computationally demanding as the number $k$ of nodes in the counted subgraph gets larger, due to the combinatorial explosion of the number of candidates. This is especially true for $k$-cliques, as we will also show in this paper, since certain real world graphs (such as social networks) are characterized by high clustering coefficients and very large numbers of small cliques.

\vspace{-2mm}
\paragraph{Our results.} The contribution in this paper is two-fold and includes both theoretical and experimental results. On the theory side, we design and analyze the first scalable exact algorithm for counting (and listing) $k$-cliques as well as sampling-based approximate solutions. In more details:
\begin{itemize}[leftmargin=0.4cm,rightmargin=0cm]
\itemsep0em
\vspace{-2mm}

\item Our exact $k$-clique counting algorithm uses $O(m^{3/2})$ total space and $O(m^{k/2})$   work, where $m$ is the number of graph edges. The local space and the local running time of mappers and reducers are $O(m)$ and $O(m^{(k-1)/2})$, respectively. For $k=3$, total space and work match the bounds for triangle counting achieved in~\cite{SV11}. Similarly to the multiway join algorithm from~\cite{AFU13}, our algorithm is work-optimal in the worst case for any $k$. However, our total space is proportional to $m^{3/2}$ regardless of $k$: this means that, differently from~\cite{AFU13}, the communication cost does not grow when $k$ gets larger.

\item Our sampling-based estimators reduce dramatically running time and local space requirements of the exact approach, while providing very accurate solutions \full{with high probability} w.h.p. These algorithms can be proved to belong to the class $\mathcal{MRC}$~\cite{KSV10} for a suitable choice of the probability parameters. For $k=3$, our concentration results require weaker conditions than the triangle counting algorithm from~\cite{PT12}.

\end{itemize}
\vspace{-1mm}

\noindent To assess the effectiveness of different clique counting approaches, we conducted a thorough experimental analysis over the Amazon EC2 platform, considering both our algorithms and those described in~\cite{AFU13} and~\cite{SV11}. We used  publicly available real data sets taken from the SNAP graph library~\cite{SNAP} and synthetic graphs generated according to the preferential attachment model~\cite{BA99}. As a side effect, our experimental study also sheds light on the number of $k$-cliques of several SNAP datasets and on its growth rate as a function of $k$. The outcome of our experiments can be summarized as follows:

\begin{itemize}[leftmargin=0.4cm,rightmargin=0cm]
\itemsep0em
\vspace{-2mm}

\item Even for small values of $k$, some of the input graphs contain a number $q_k$ of $k$-cliques that can  be in the order of tens or hundreds of trillions (recall that a trillion is $10^{12}$): in these cases, the output of the $k$-clique listing problem could easily require Terabytes or even Petabytes of storage (assuming no compression). As $k$ increases, we witnessed different growth rates of $q_k$ for different graph instances: while on some graphs $q_{k+1}<q_k$ (even for small values of $k$), on other instances $q_{k+1}\gg q_k$, up to two orders of magnitude in our observations. Graphs with a quick growth of the number of $k$-cliques represent particularly tough instances in practice.

\item Among the  exact algorithms considered in our analysis, our approach proves to be the most effective when the number of $k$-cliques is large. There are cases where it is outperformed by the triangle counting algorithm of~\cite{SV11} and by the multiway join algorithm of~\cite{AFU13}: this happens either for $k=3,4$ or on ``easy'' instances where parallelization does not pay off (we observed that in these cases a simple sequential algorithm would be even faster). On the tough instances, however, our algorithm can gracefully scale as $k$ gets larger, differently from the multiway join approach~\cite{AFU13}. We provide a theoretical justification of these experimental findings.

\item Our approximate algorithms exhibit rather stable running times on all graphs for the considered values of $k$, and make it possible to solve in a few minutes instances that were impossible to be solved exactly. The quality of the approximation is extremely good: the error is around $0.08\%$ on average, with more accurate estimates on datasets that are more challenging for the exact algorithms. The variance across different executions, even on different clusters, also appears to be negligible.

\end{itemize}
\vspace{-1mm}

\noindent Overall, the experiments show the practical effectiveness of our algorithms even on clusters of small/ medium size, and suggest their scalability to larger clusters. The full experimental package is available at {\tt https://github.com/CliqueCounter/QkCount/} for the purpose of repeatability. 
 
\full{
\paragraph{Organization of the paper.} Section~\ref{se:preliminaries} gives some theoretical preliminaries and briefly reviews MapReduce and the computational model proposed in~\cite{KSV10}. Section~\ref{se:exact} presents our exact algorithm, analyzing its communication and computation costs. Section~\ref{se:setup} describes our experimental framework, including the evaluated algorithms and the Amazon cluster setup. Section~\ref{se:experiments} summarizes our main experimental findings on exact algorithms. Our sampling-based estimator is described and analyzed, both theoretical and experimentally, in Section~\ref{se:approximate}.
}

\vspace{-2mm}
\section{Preliminaries}
\label{se:preliminaries}

\full{We will make use of the following notation. For a set $V$, $V^i$ denotes the $i$-th Cartesian product, i.e., $V^i=V\times V\times \ldots \times V$, $i$ times.}
Throughout this paper we denote by $q_k$ the number of cliques on $k$ nodes.
For a given graph $G$ and any node $u$ in $G$, $\Gamma(u)$ is the set of neighbors of node $u$ ($u$ is not included); moreover $d(u)=|\Gamma(u)|$.
We define a total order $\prec$ over the nodes of $G$ as follows: $\forall x,y \in V(G)$, $x \prec y$ iff $d(x)<d(y)$ or $d(x)=d(y)$ and  $x<y$ (we assume nodes to have comparable labels).
Denote by $\Gamma^+(u) \subseteq \Gamma(u)$ the {\em high-neighborhood} of node $u$, i.e., the set of neighbors $x$ of $u$ such that $u \prec x$; symmetrically, $\Gamma^-(u)=\Gamma(u)\setminus\Gamma^+(u)$. \full{is the set of neighbors $x$ of node $u$ such that $x \prec u$. }
Given two graphs $G(V,E)$ and $G_1(V_1,E_1)$, $G_1$ is
a subgraph of $G$ if $V_1\subseteq V$ and $E_1\subseteq E$. $G_1$ is an {\em induced subgraph} of $G$ if, in addition to the above conditions, for each $u,v\in V_1$ it also holds: $(u, v) \in E_1$ if and only if $(u, v)\in E$. We denote the subgraph induced by the high-neighborhood $\Gamma^+(u)$ of a node $u$ as $G^+(u)$.
Our algorithms do not require graphs to be connected. However, since their input is given as a set of edges and isolated nodes are irrelevant for clique counting, we can assume \full{the number $n$ of nodes to be at most twice the number $m$ of edges, i.e.,} $n\le 2m$.
Moreover, we assume the endpoints of each edge to be labeled with their degree (the degree of each node can be precomputed in MapReduce very efficiently~\cite{KSV10}). Additional preliminary results used in our proofs are given in Appendix B.

\vspace{2mm}
\noindent{\bf MapReduce.} 
A MapReduce program is composed of (usually a small number of) rounds.
Each round is conceptually divided into three consecutive phases: {\em map}, {\em shuffle}, and {\em reduce}.
Input/output values of a round, as well as intermediate data exchanged between mappers and reducers, are stored as $\langle key; value \rangle$ pairs.
In the map phase pairs are arbitrarily distributed among mappers and a programmer-defined map function is applied to each pair. Mappers are stateless and process each input pair independently from the others. The shuffle phase is transparent to the programmer: during this phase, the intermediate output pairs emitted by the mappers are grouped by key. All pairs with the same key are then sent to the same reducer, that will process each of them by executing a programmer-defined reduce function.
Parallelism comes from concurrent execution of mappers as well as reducers.
The authors of~\cite{KSV10} made an effort to pinpoint the critical aspects of efficient MapReduce algorithms. In particular: (1) the memory used by a single mapper/reducer should be sublinear with respect to the total input size (this allows to exclude trivial algorithms that simply map the whole input to a single reducer, which then solves the problem via a sequential algorithm); (2) the total number of machines available should be sublinear in the data size; (3) both the map and the reduce functions should run in polynomial time with respect to the original input length.
\full{
The authors of~\cite{KSV10} made an effort to pinpoint the critical aspects of efficient MapReduce algorithms.
In particular, they considered the following aspects:
\begin{description}
\item[Memory:] the memory used by a single mapper or reducer should be sublinear with respect to the total input size. This allows to exclude trivial algorithms that simply map the whole input to a single reducer, which then solves the problem via a sequential algorithm.
\item[Machines:] the total number of machines available should be sublinear in the data size.
\item[Time:]  both the map and the reduce functions should run in polynomial time with respect to the original input length.
\end{description}
}
The model also requires programs to be composed of a polylogarithmic number of rounds, since shuffling is a time consuming operation, and to have a total memory usage (which coincides with the communication cost from~\cite{AFU13} for the purpose of this paper) that grows substantially less than quadratically with respect to the input size. Algorithms respecting these conditions are said to belong to the class $\mathcal{MRC}$.

\full{Besides the model proposed in~\cite{KSV10}, which we will use throughout this paper, other efforts to formalize the analysis of MapReduce algorithms have been proposed in \cite{GSZ11} and~\cite{PPRSU12}. Other efforts to formalize the analysis of MapReduce algorithms have been proposed in \cite{GSZ11} and~\cite{PPRSU12}.}

\vspace{-3mm}
\section{Exact counting}
\label{se:exact}

Our algorithms use the total order $\prec$ to decide which node of a given clique $Q$ is responsible for counting $Q$. In particular, $Q$ is counted by its \emph{smallest} node, i.e., the node $u\in Q$ such that $u\prec x$, for all $x\in Q\setminus\{u\}$.
At a high level, the strategy is to split the whole graph in many subgraphs, namely the subgraphs $G^+(u)$ induced by $\Gamma^+(u)$, for each $u\in V(G)$, and count the cliques in each subgraph independently (both nodes and edges of $G$ can appear in more than one subgraph).
Our counting algorithm, called \sik,  works in three rounds (see Algorithm~\ref{fi:qk-count} for the pseudocode):

\begin{algorithm}[t]
\begin{multicols}{2}
\begin{small}
\smallskip
\keyword{Map 1:} input $\langle(u,v); \emptyset  \rangle$

\begin{algorithmic}[0]
\STATE {\bf{if}}~{$u\prec v$} {\bf{then}} emit $\langle u;v\rangle$
\end{algorithmic}

\smallskip
\keyword{Reduce 1:} input $\langle u; \Gamma^+(u) \rangle$

\begin{algorithmic}[0]
\STATE {\bf{if}}~{$|\Gamma^+(u)| \ge k-1$} {\bf{then}} emit $\langle u; \Gamma^+(u)\rangle$
\end{algorithmic}

\smallskip
\keyword{Map 2:} input $\langle u; \Gamma^+(u) \rangle$ or $\langle(u,v); \emptyset  \rangle$

\begin{algorithmic}[0]
\IF{input of type $\langle(u,v); \emptyset  \rangle$ and $u\prec v$}
\STATE emit $\langle (u,v); \$) \rangle$
\ENDIF
\IF{input of type $\langle u; \Gamma^+(u) \rangle$}
\FOR{\keyword{each} $x_i,\,x_j \in \Gamma^+(u)$ s.t. $x_i \prec x_j$}
		\STATE emit $\langle (x_i,x_j); u \rangle$
\ENDFOR
\ENDIF
\end{algorithmic}
\newpage

\smallskip
\keyword{Reduce 2:} input $\langle (x_i, x_j);  \{ u_1, \ldots, u_k \} \cup \mbox{\$}\rangle$ 

\begin{algorithmic}[0]
\IF{input contains \$}
	\STATE emit $\langle (x_i, x_j);   \{ u_1, \ldots, u_k \}  \rangle$
\ENDIF
\end{algorithmic}

\smallskip
\keyword{Map 3:} input $\langle (x_i,x_j);  \{ u_1, \ldots, u_k \} \rangle$

\begin{algorithmic}[0]
\FOR{$h \in [1,k]$}
	\STATE emit $\langle u_h; (x_i, x_j) \rangle$
\ENDFOR
\end{algorithmic}

\smallskip
\keyword{Reduce 3:} input $\langle u;  G^+(u) \rangle$ 
\begin{algorithmic}[0]
\STATE let $q_{u,k-1}=$ number of $(k-1)$-cliques in $G^+(u)$
\STATE emit $\langle u; q_{u,k-1} \rangle$
\end{algorithmic}
\end{small}
\end{multicols}
\vspace{-3mm}
\caption{: \sik}\label{fi:qk-count}
\end{algorithm}

\begin{description}[leftmargin=0.4cm,rightmargin=0cm]
\itemsep0em
\vspace{-2mm}

\item[Round 1:] \emph{high-neighborhood computation.} The computation of $\Gamma^+(u)$, for all nodes $u$ in $G$, exploits the degree information attached to the edges; mappers emit the pair $\langle x; y\rangle$ for each edge $(x,y)$ such that $x\prec y$, thus allowing the reduce instance with key $u$ to aggregate all nodes $x\in \Gamma^+(u)$.

\item[Round 2:] \emph{small-neighborhoods intersection.} 
The aim of the round is to associate each edge $(x,y)$ with $\Gamma^-(x)\cap\Gamma^-(y)$, i.e., with the set of nodes $u$ such that $G^+(u)$ contains $(x,y)$. 
This is done as follows. The map instance with input  $\langle u;\Gamma^+(u)\rangle$  emits a pair $\langle (x,y); u \rangle$ for each pair $(x,y)\in \Gamma^+(u)\times\Gamma^+(u)$ such that $x\prec y$. Besides the output of round 1, similarly to~\cite{SV11} mappers are fed with the original set of edges and emit a pair $\langle (x,y); \$\rangle$ for each edge $(x,y)$ with $x\prec y$.
This allows the reduce instance with key $(x,y)$ to check whether $(x,y)$ is an edge by looking for symbol $\$$ among its input values. At the same time this instance would receive the set $\Gamma^-(x)\cap\Gamma^-(y)$, which is  exactly the set of nodes $u$ needing edge $(x,y)$ to construct $G^+(u)$.

\item[Round 3:] ($k-1$)-\emph{clique counting in high-neighborhoods.} For each node $u$,  count the number of $k$-cliques for which $u$ is responsible. \full{This is done as follows.} Map instances correspond to graph edges. The map instance with key  $(x,y)$ emits a pair $\langle u; (x,y)\rangle$ for each node $u\in \Gamma^-(x)\cap\Gamma^-(y)$. 
After shuffling, the reduce instance with key $u$ receives as input the whole list of edges between nodes in $\Gamma^+(u)$. Hence, it can reconstruct the subgraph $G^+(u)$ induced by the high-neighbors of $u$ and, by counting the $(k-1)$-cliques in this graph, can compute locally the number of $k$-cliques for which $u$ is responsible. 
\end{description}
\vspace{-2mm}

\noindent Notice that the round 3 reducers could be easily modified to output, for each node $v$, the number of cliques in which $v$ is contained, so that the overall number of cliques containing $v$ could be obtained by summing up the contributions from each subgraph $G^+(u)$. The following theorem, proved in Appendix B, analyzes work and space usage of \sik:

\begin{theorem}
\label{th:qk-count}
Let $G$ be a graph and let $m$ be the number of its edges.
Algorithm \sik\ counts the number of $k$-cliques of $G$ using $O(m^{3/2})$ total space and $O(m^{k/2})$  work. The local space and the local running time of mappers and reducers are $O(m)$ and $O(m^{(k-1)/2})$, respectively.
\end{theorem}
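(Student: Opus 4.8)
The plan is to establish correctness first and then the four resource bounds in turn, with everything resting on a single structural fact about high-neighborhoods. For correctness I would argue that, since $\prec$ is a total order, every $k$-clique $Q$ has a unique $\prec$-minimal node $u$, and by definition the remaining $k-1$ nodes of $Q$ all lie in $\Gamma^+(u)$ and are mutually adjacent, so they form a $(k-1)$-clique in $G^+(u)$; conversely every $(k-1)$-clique of $G^+(u)$ extends with $u$ to a $k$-clique whose $\prec$-minimum is $u$. This bijection gives $\sum_u q_{u,k-1}=q_k$, so each clique is counted exactly once. The only thing to verify is that Reduce 3 really receives all edges of $G^+(u)$: this follows because Round 2 associates each edge $(x,y)$ with exactly the set $\Gamma^-(x)\cap\Gamma^-(y)=\{u : x,y\in\Gamma^+(u)\}$, and Map 3 then routes $(x,y)$ to each such $u$.

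The key structural fact is the degree-ordering bound $|\Gamma^+(u)|\le\sqrt{2m}$ for every $u$ (an Appendix B preliminary): writing $t=|\Gamma^+(u)|\le d(u)$ and noting that every $x\in\Gamma^+(u)$ satisfies $d(x)\ge d(u)\ge t$, the degree sum over $\Gamma^+(u)$ is at least $t^2$ but at most $2m$, whence $t\le\sqrt{2m}$. With this I would bound total space. Round 1 and the raw edge stream contribute $O(m)$ pairs, so the bottleneck is Map 2, where each key $\langle u;\Gamma^+(u)\rangle$ emits $\binom{|\Gamma^+(u)|}{2}$ pairs; summing, $\sum_u|\Gamma^+(u)|^2\le\sqrt{2m}\sum_u|\Gamma^+(u)|=\sqrt{2m}\cdot m=O(m^{3/2})$, using $\sum_u|\Gamma^+(u)|=m$. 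Rounds 2 and 3 only forward this data, so total space is $O(m^{3/2})$. For local space, every mapper holds at most one $\Gamma^+(u)$ of size $O(\sqrt m)$; every Round 3 reducer holds $G^+(u)$ with at most $\binom{|\Gamma^+(u)|}{2}=O(m)$ edges; Round 2 reducers hold $\Gamma^-(x)\cap\Gamma^-(y)\subseteq\Gamma(x)$, of size $O(m)$. Hence local space is $O(m)$ throughout.

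For work I would again locate the bottleneck in Round 3, where reducer $u$ counts $(k-1)$-cliques in $G^+(u)$. Using the trivial enumeration over all $(k-1)$-subsets of the $|\Gamma^+(u)|=O(\sqrt m)$ vertices, one reducer runs in $O(|\Gamma^+(u)|^{k-1})=O(m^{(k-1)/2})$ time, which is the claimed local running time (Map 2's $O(|\Gamma^+(u)|^2)=O(m)$ cost is dominated for $k\ge 3$). Summing over reducers is the crux of the work bound: factoring out $k-2$ copies of the $\sqrt{2m}$ bound gives $\sum_u|\Gamma^+(u)|^{k-1}\le(\sqrt{2m})^{k-2}\sum_u|\Gamma^+(u)|=O(m^{(k-2)/2})\cdot m=O(m^{k/2})$, and since every other round costs $O(m^{3/2})\le O(m^{k/2})$, the total work is $O(m^{k/2})$.

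The main obstacle is not any single estimate but the repeated reduction of the relevant sums to the linear quantity $\sum_u|\Gamma^+(u)|=m$ via the uniform bound $|\Gamma^+(u)|\le\sqrt{2m}$: this one inequality simultaneously yields the $m^{3/2}$ total space (one power of $\sqrt m$ times $m$) and the $m^{k/2}$ work ($k-2$ powers of $\sqrt m$ times $m$), and pins the local running time to $m^{(k-1)/2}$. I would double-check the edge case $k=3$, where the $(k-1)$-clique count degenerates to an edge count and Map 2's cost matches the reducer cost, and confirm that the degree labels on edges make Map 1 stateless and $O(1)$ per edge, so that the dominating terms are exactly the ones identified.
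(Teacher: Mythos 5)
Your proof is correct and takes essentially the same route as the paper's: the same bottlenecks (Map~2 emissions for the $O(m^{3/2})$ total space, Reduce~3 clique enumeration for the $O(m^{k/2})$ work), the same trick of factoring powers of the uniform bound $|\Gamma^+(u)|=O(\sqrt m)$ out of $\sum_u|\Gamma^+(u)|=m$, and the same per-round accounting (with $n\le 2m$) for the $O(m)$ local space and $O(m^{(k-1)/2})$ local time. The only deviations are cosmetic: you prove the high-neighborhood bound by a degree-sum argument giving $\sqrt{2m}$ where the paper counts nodes of degree above $\sqrt m$ to get $2\sqrt m$, and your explicit bijection-plus-routing argument for correctness merely spells out what the paper asserts in a single line.
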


\full{\vspace{-4mm}
\paragraph{Discussion.}}
With respect to the requirements defined in~\cite{KSV10}, algorithm \sik\ does not fit in the class $\mathcal{MRC}$ due to the local space requirements of reduce 2, map 3, and reduce 3 instances. These are linear in $n$ or $m$ whereas the $\mathcal{MRC}$  class requires them to be in $O(m^{1-\varepsilon})$, for some small constant $\varepsilon>0$. However, as we will see from the experimental evaluation we performed, local memory did not show any criticality in practice, whereas local complexity (which is almost completely neglected in the class $\mathcal{MRC}$) and global work proved to be the real challenge, since they can grow significantly as $k$ gets larger.
Our approximate algorithms presented in Section~\ref{se:approximate} overcome these issues.

Similarly to the triangle counting algorithms from~\cite{SV11} and to the multiway join subgraph counting algorithm from~\cite{AFU13}, cast to the specific case of cliques (in short, \afuk), the work of \sik\ is optimal with respect to the clique listing problem. Moreover, the total space of  \sik\ is $\Theta(m^{3/2})$ regardless of $k$, matching the triangle counting bound of the {\tt Node\,Iterator\,++} algorithm from~\cite{SV11} when $k=3$. 
Conversely, both \afuk\ and the straightforward generalization to $k$-cliques of the {\tt Partition} algorithm from~\cite{SV11}  have a communication cost that depends both on a number $b$ of {\em buckets}, chosen as a parameter, and on the number $k$ of clique nodes (see \cite{AFU13} and \cite{SV11} for details). In \afuk, reducers are identified by $k$-tuples of buckets $\langle b_i\le b_2\le \ldots \le b_k\rangle$. Each edge corresponds to a pair $\langle i,j\rangle$ of buckets (those to which its endpoints are hashed) and is sent to all the reducers whose $k$-tuple contains both $i$ and $j$. Each edge is thus replicated $\Omega(b^{k-2})$ times, for an overall communication cost $\Theta(m \cdot b^{k-2})$. Similar arguments apply to the generalization of {\tt Partition}. We will see the implications in Section~\ref{se:experiments}. We also notice that, differently from \afuk\ and {\tt Partition}, algorithm \sik\ needs no parameter tuning.

\vspace{-2mm}

\section{Experimental setup}
\label{se:setup}

\vspace{-2mm}
\noindent{\bf Algorithms and implementation details.} Besides algorithm \sik, we included in our test suite the {\tt Node\,Iterator\,++} triangle counting algorithm from~\cite{SV11} (called \nit) and the one-round subgraph counting algorithm \afuk\ based on multiway joins~\cite{AFU13}, cast to $k$-cliques. We did not consider the {\tt Partition} algorithm from~\cite{SV11} because \afuIII\ is its optimized version.
Both the reduce 3 instances of \sik\ and the reducers of \afuk\ use as a subroutine an efficient clique counting algorithm based on neighborhood intersection, inspired by the fastest (optimal) triangle listing algorithm described in~\cite{OB14}.  In the case of \afuk, we took care of exploiting bucket orderings to discard as soon as possible $k$-cliques that do not fit in the $k$-tuple of a given reducer. According to our tests, this results in slightly faster running times w.r.t. to the plain version.
All the implementations have been realized in Java using Hadoop 2.2.0. The code is instrumented so as to collect detailed statistics of map/reduce instances at each round, including sizes of the subgraphs involved in a computation (e.g., $|\Gamma^+(u)|$, $|\Gamma^-(x)\cap\Gamma^-(y)|$, and $|G^+(u)|$) and detailed running times. \full{Due to the extremely large size of the log files and in order to prevent Heisenberg effects, all the results reported in this paper were obtained with instrumentation disabled, unless otherwise noticed.} The algorithms have been tested in a variety of settings, using different parameter choices, instance families, and cluster configurations, as described below.  \full{The full package is available at: {\tt https://github.com/CliqueCounter/QkCount/} for the sake of repeatability.}

\begin{table}[t]
\begin{center}
\begin{footnotesize}
\centering\begin{tabular}{|clllllll|}\hline
&  \multicolumn{1}{|c}{$n$ ($=q_1$)} & \multicolumn{1}{c}{$m$ ($=q_2$)} & \multicolumn{1}{c}{$q_3$} & \multicolumn{1}{c}{$q_4$} & \multicolumn{1}{c}{$q_5$} & \multicolumn{1}{c}{$q_6$} & \multicolumn{1}{c|}{$q_7$} \\\hline\hline 
\multicolumn{1}{|c|}{\texttt{citPat} }& \Hbox{13pt}$3.8\times10^6$ & $1.6\times10^7$ & $7.5\times10^6$ & $3.5\times10^6$ & $3.0\times10^6$ & $3.1\times10^6$  & $1.9\times10^6$\\
\multicolumn{1}{|c|}{\texttt{youTube}} &  $1.1\times 10^5$ & $3.0\times 10^6$  &$3.0\times10^6$ & $5.0 \times 10^6$  & $7.2\times10^6$ &$8.4\times 10^6$  & $8.0\times 10^6$ \\
\multicolumn{1}{|c|}{\texttt{locGowalla}} &  $2.0\times10^5$ & $9.5\times10^5$ & $2.7\times10^6$  & $6.1\times10^6$  & $1.5\times10^7$ & $2.9\times10^7$ & $4.8\times10^7$\\
\multicolumn{1}{|c|}{\texttt{socPokec} }&  $1.6\times10^6$  & $2.2\times10^{7}$   & $3.3\times10^7$ & $4.3\times10^7$ & $5.3\times10^7$ & $6.5\times10^7$ & $8.4\times10^7$ \\
\multicolumn{1}{|c|}{\texttt{webGoogle}} &  $8.7\times10^5$ & $4.3\times10^6$ & $1.3\times10^7 $ & $3.4\times10^7$ & $1.0\times10^8$  & $2.5\times10^8$ & $6.0\times10^8$ \\
\multicolumn{1}{|c|}{\texttt{webStan}} &  $2.8\times10^5$ & $2.0\times10^6$ & $1.1\times10^7$ & $7.9\times10^7$ & $6.2\times10^8$ & $4.9\times10^9$ & $3.5\times10^{10}$ \\
\multicolumn{1}{|c|}{\texttt{asSkit} }&  $1.7\times10^6$ & $1.1\times10^7$ & $2.9\times10^7$ & $1.5\times10^8$ & $1.2\times10^9$ & $9.8\times10^9$  & $7.3\times10^{10}$ \\
\multicolumn{1}{|c|}{\texttt{orkut} }&  $3.1\times10^6$ & $1.2\times10^8$ & $6.3\times10^8$ & $3.2\times10^9$ & $1.6\times10^{10}$& $7.5\times10^{10}$ & $3.5\times10^{11}$ \\
\multicolumn{1}{|c|}{\texttt{webBerkStan}} &  $6.8\times10^5$ & $6.6\times10^6$ & $6.5\times10^7$ & $1.1\times10^9$ & $2.2\times10^{10}$ & $4.6\times10^{11}$& $9.4\times10^{12}$ \\
\multicolumn{1}{|c|}{\texttt{comLiveJ}} &  $4.0\times10^6$ & $3.5\times10^7$ & $1.8\times10^8$ & $5.2\times10^9$ & $2.5\times10^{11}$ & $1.1\times10^{13}$ & {\it 4.4\,$\times$\,10$^{14}$} \\
\multicolumn{1}{|c|}{\texttt{socLiveJ1}} &  $4.8\times10^6$ & $4.3\times10^7$ & $2.9\times10^8$ & $9.9\times10^9$ & $4.7\times10^{11}$ & $2.1\times10^{13}$ & {\it 8.6\,$\times$\,10$^{14}$}\\
\multicolumn{1}{|c|}{\texttt{egoGplus}} &  $1.1\times 10^5$ & $1.2\times10^7$  & $1.1\times10^9$  & $7.8\times10^{10}$ & $4.7\times10^{12}$  & {\it 2.4\,$\times$\,10$^{14}$}& {\it 1.1\,$\times$\,10$^{16}$}
\\\hline
\end{tabular}
\end{footnotesize}
\vspace{-1mm}
\caption{Benchmark statistics: order of magnitude of $n$, $m$, and $q_k$ (numbers of nodes, edges, and $k$-cliques, respectively) for $k\in[3,7]$. Clique numbers in {\it italic} are estimates obtained by the algorithm described in Section~\ref{se:approximate}. \full{The table in Appendix C shows the exact values of $n$, $m$, and $q_k$, the storage in MB, and the ratio $q_{k+1}/q_k$ that gives the clique growth rate as a function of $k$.} The table in Appendix C shows the exact values and the clique growth rates.}
\vspace{-7mm}
\label{fig:data-sets}
\end{center}
\end{table}

\vspace{2mm}
\noindent{\bf Data sets.} We used several real-world graphs from the SNAP graph library~\cite{SNAP} as well as synthetic graphs generated according to the preferential attachment model~\cite{BA99}. We preprocessed all graphs so that they are undirected and each edge endpoint is associated with its degree. Degree computation is a common step to both \nit\ and \sik, and can be done very easily and quickly in MapReduce~\cite{KSV10}. Throughout the paper we report on the results obtained for a variety of online social networks (called \texttt{orkut}, \texttt{socPokec}, \texttt{youTube}, \texttt{locGowalla}, \texttt{socLiveJ1}, \texttt{comLiveJ}, \texttt{egoGplus}), Web graphs (\texttt{webBerkStan}, \texttt{webGoogle}, \texttt{webStan}), an Internet topology graph (\texttt{asSkit}), and a citation network among US Patents (\texttt{citPat}).
The main characteristics of these datasets are summarized in Table~\ref{fig:data-sets}. Notice that only the number  $q_3$ of triangles was available from~\cite{SNAP} before our study. With respect to $q_3$, the number of $k$-cliques for larger values of $k$ can grow considerably, up to the order of tens or even hundreds of trillions. 

\vspace{2mm}
\noindent{\bf Platform.} The experiments have been carried out on three different Amazon EC2 clusters, running Hadoop 2.2.0. Besides the master node, the three clusters included 4, 8, and 16 worker nodes, respectively, devoted to both Hadoop tasks and the HDFS. We used Amazon EC2 {\tt m3.xlarge} instances, each providing 4 virtual cores, \full{(based on Intel Xeon E5-2670 v2 Sandy Bridge processors)} 7.5 GiB of main memory, and a 32 GB solid state disk. We set the number of reduce tasks to match the number of virtual cores in each cluster and disabled speculative execution. We also modified the memory requirements of the containers in order to improve load balancing on the cluster. The precise Hadoop configuration used on the Amazon clusters is provided with our experimental package available on \texttt{github}.

\begin{table}[t]
\centering
\begin{footnotesize}
\begin{tabular}{|c|rrr|rr|rr|rr|rr|}\hline
& \nit  &\siIII & \afu{3}  & \siIV & \afu{4} & \siV & \afu{5}  & \siVI & \afu{6} & \siVII &\afu{7} \\\hline\hline
\texttt{citPat} & 2:44 & 3:22 & 2:23 & 3:11 & 3:11 & 3:13 & 2:18 & 3:13 & 2:19 & 3:09 & 2:24 \\
\texttt{youTube} & 2:06 & 2:04 & 1:25 & 2:39 & 1:41 & 2:34 & 1:33 & 2:36 & 1:39 & 2:38 & 1:49 \\
\texttt{locGowalla} & 2:36 & 3:08 & 1:18 & 3:04 & 1:21 & 3:02 &1:30 &3:04 &1:24 &3:03 & 1:30 \\
\texttt{socPokec} & 4:03 & 4:15 & 2:18 & 4:02 & 2:29 & 4:13 & 2:39 & 4:15 & 2:51 & 4:09 & 3:02 \\
\texttt{webGoogle} & 2:13 & 2:44 & 1:23 & 2:43 & 1:27 & 2:43 & 1:32 & 2:40 & 1:40 & 2:40 & 1:52 \\
\texttt{webStan} & 2:02 & 2:39 & 1:15 & 2:29 & 1:27 & 2:37 & 2:06 & 2:36 & 4:00 & 2:05 &14:12 \\
\texttt{asSkit} &  2:44 & 3:14 & 1:43 & 3:17 & 2:59 & 3:18 & 5:34 & 3:14 & 25:30 & 4:12 & $>$40 \\
\texttt{orkut} & 30:07 &  24:00 & 8:21 & 23:08 & 20:17 & 23:10 & $>$50 & 23:22 & $>$50 & 28:08 & - \\
\texttt{webBerkStan} & 2:28 & 3:00 & 1:37 & 3:01 & 2:53 & 3:08 & 8:24 & 4:56 & $>$30 & 50:17 & - \\
\texttt{comLiveJ} & 5:31  & 5:31 & 2:53 & 5:24  &  4:06 & 6:13 & 14:02 & 41:22 & $>$170 & - & - \\
\texttt{socLiveJ1} & 6:36 & 6:33 & 3:14 & 6:43 & 5:10 & 7:51 & 23:35 & 86:34&  $>$180 &  - & -\\
\texttt{egoGplus} & 22:54 & 17:19 & 2:06 & 17:54 & 16:55 & 39:01 & $>$90 & - & -  & - & -\\
\hline
\end{tabular}
\end{footnotesize}
\vspace{-1mm}
\caption{Running time  (minutes:seconds) of the algorithms on a 16-node cluster with 64 total cores. To minimize the costs, we killed some executions of \afuk\ that took more than twice the time of \sik. For the sake of comparison, the running times of algorithm \nit\ reported in~\cite{SV11} are 1.90, 1.77, and 5.33  minutes on \texttt{asSkit}, \texttt{webBerkStan}, and \texttt{socLiveJ1}, respectively, on a 1636-node cluster. 
}
\label{fig:runtime-15nodes}
\end{table} 

\vspace{-3mm}

\section{Computational experiments}
\label{se:experiments}

\vspace{-1mm}

In this section we summarize our main experimental findings. We first present results obtained on a 16-node Amazon cluster, analyzing the effects of $k$ on the performance of the algorithms, and we then address scalability issues on different cluster sizes. Our experiments account for more than 60 hours of computation over the EC2 platform. Table~\ref{fig:runtime-15nodes} is the main outcome of this study, showing the running times of all the evaluated algorithms for $k\le 7$ on the SNAP graphs ordered by increasing $q_7$ (see Appendix C). Results for synthetic instances were consistent with real datasets and are not reported. 

\vspace{2mm}
\noindent{\bf Triangle counting: the costs of rounds.} Since the overhead of setting up a round -- including shuffling -- is non-negligible in MapReduce, the one-round \afuIII\ algorithm is always much faster than \nit\ and \siIII, which respectively require  two and three rounds. \siIII\ is slower than \nit\ on most datasets, but faster on \texttt{orkut} and \texttt{egoGplus}, which have the largest number of triangles (see also Table~\ref{fig:data-sets}). We conjectured that this may be due to the early computation of length-$2$ paths performed in \nit\ by the round 1 reducers (see~\cite{SV11} for details), which uselessly increases the communication cost of round 1. To test our hypothesis, we engineered a variant of \nit\ that delays $2$-path computation to the map phase of round 2. The variant showed largely improved running times, solving \texttt{orkut} and \texttt{egoGplus} in 22 and 12 minutes (instead of 30 and 23, resp.) and being faster than \siIII\ on all benchmarks.

\vspace{2mm}
\noindent{\bf Running time analysis for $k\ge 4$.} Algorithm \siIV\ can compute the number of $4$-cliques within roughly the same time required to count triangles. \afuIV\ remains faster than \siIV, but is always slower than \afuIII: see, in particular, \texttt{orkut} and \texttt{egoGplus}. In general, when $k\ge 5$, \sik\ proves to be more and more effective than \afuk\ and its running times scale gracefully with $k$, especially on the most difficult instances characterized by a steep growth of the number of $k$-cliques (\texttt{asSkit}, the LiveJournal networks, \texttt{orkut},  \texttt{webBerkStan}, and \texttt{egoGplus}).
To explain this behavior, recall that \afuk\ requires to choose the number $b$ of buckets that, together with $k$, determines the number of reducers. The communication cost, as observed in Section~\ref{se:exact}, grows as $\Theta(m \cdot b^{k-2})$, which in practice calls for small values of $b$: if $b$ is too large, even a small input graph could quickly grow to Terabytes of disk usage for moderate values of $k$. On the other hand, since the local running times of reducers are inversely proportional to $b$, if $b$ is too small w.r.t.~$k$ the worst-case reducers incur high running times (if, e.g., $k>b/2$, some $k$-tuple contains more than half of the buckets and the corresponding reducer will receive more than half of the total number of edges). The running times of such reducers  (as well as their memory requirements) remain comparable to that of the sequential algorithm. The practical implication of this communication/runtime tradeoff is that the best performance of \afuk\ have been obtained within our experimental setup using rather small values of $b$. We nevertheless performed experiments with different choices of $b$: as suggested in~\cite{AFU13}, we first set $b$ so that ${b+k-1}\choose k$  is as close as possible to the available reducers, and we then considered a variety of different values. Table~\ref{fig:runtime-15nodes} always reports the fastest running time that we could obtain by separately tuning $b$ for each instance and $k$. 

The $\Theta(m^{3/2})$ communication cost of \sik\  proved to be a bottleneck only in a few datasets, while the actual clique enumeration remains the most time-consuming phase. However, the running times of reducers are much shorter, not only in theory but also in practice, than the application of a sequential algorithm to the whole graph: hence, using more rounds results in poor performance only when the overall task has a short duration, but quickly outperforms both \afuk\ and sequential algorithms on the most demanding graphs and as $k$ increases. Insights on round analysis are given below.

\begin{figure*}[t]
\centering
\begin{tabular}{ccc}
 \hspace{-5mm}
 \includegraphics[scale=0.43]{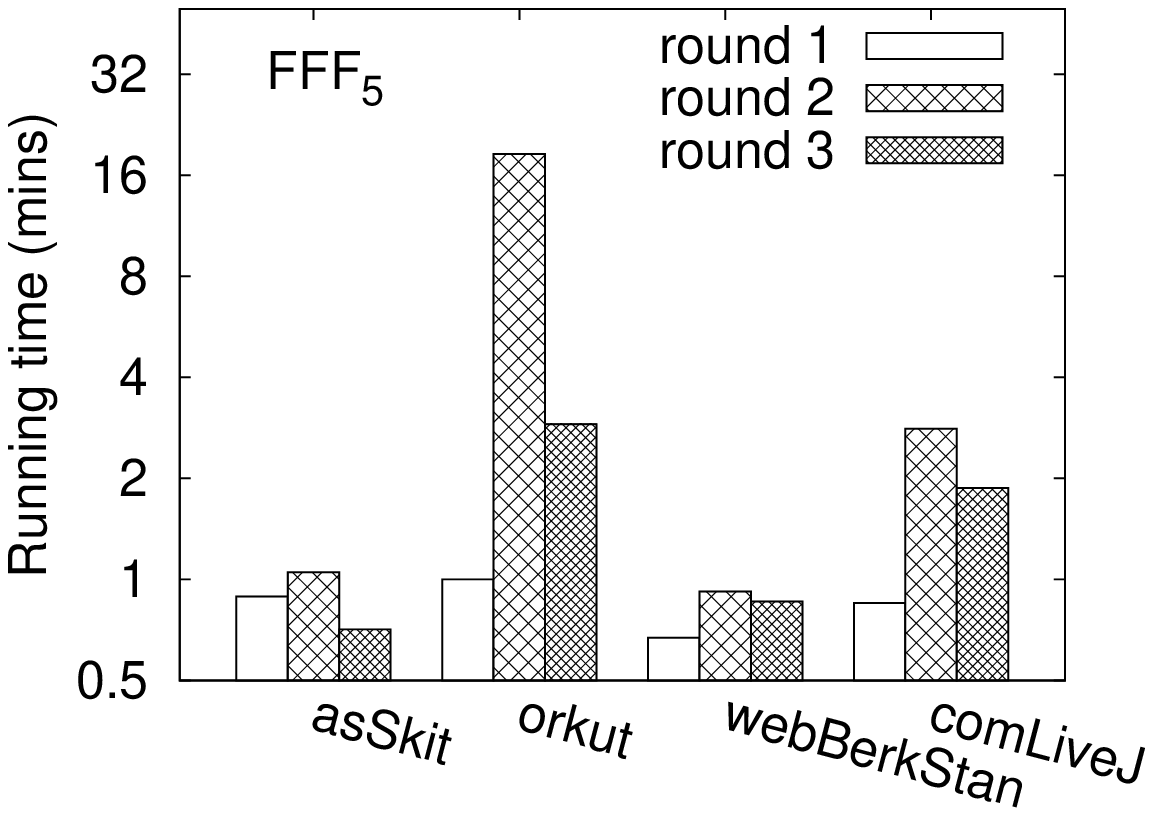} \hspace{-4.5mm} &
 \includegraphics[scale=0.43]{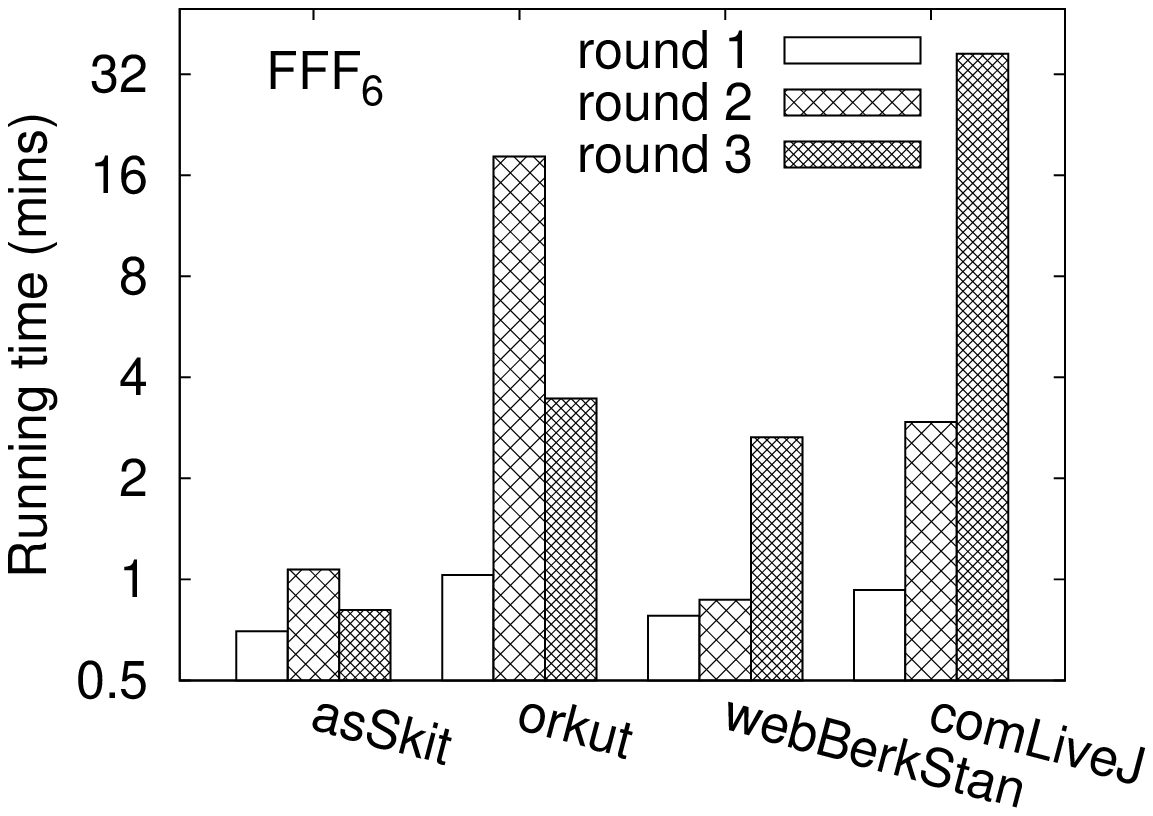} \hspace{-4.5mm} &
 \includegraphics[scale=0.43]{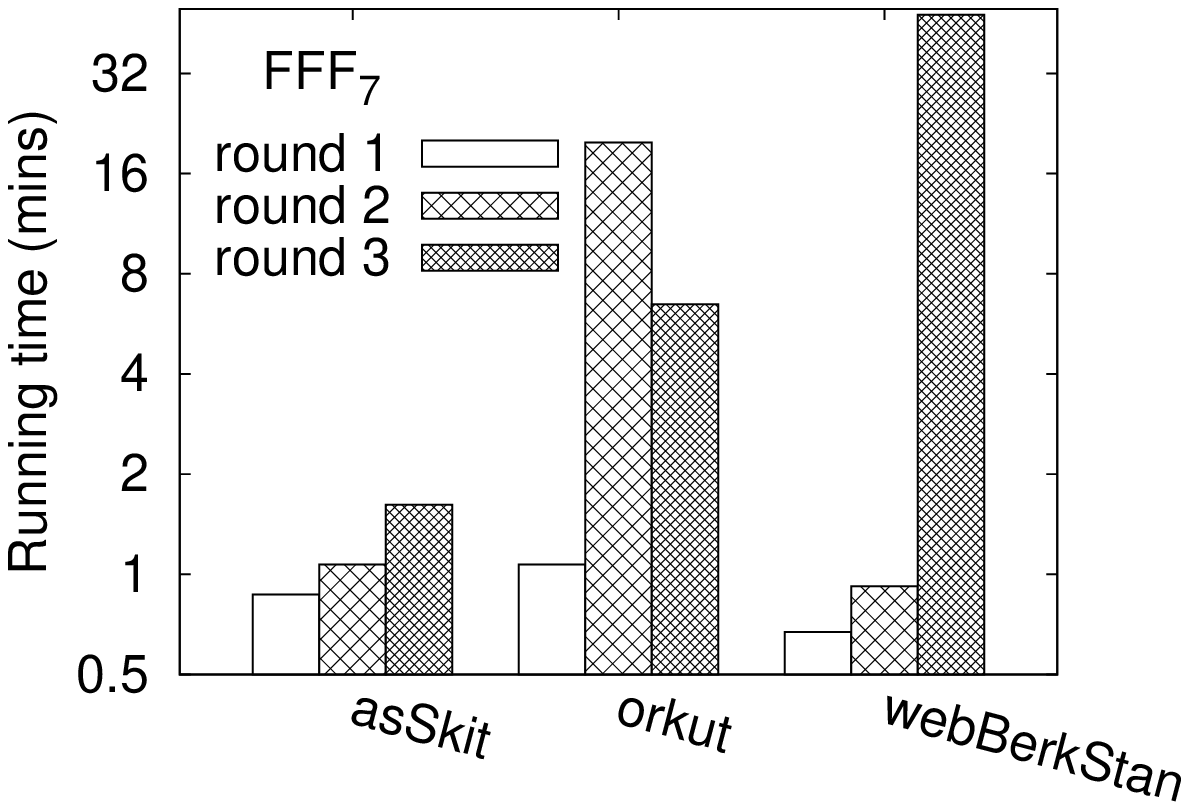} \hspace{-4.5mm}
\end{tabular}
\vspace{-5mm}
\caption{Round-by-round running times of \sik\ on four representative datasets for $k\in[5,7]$.}
\vspace{-3mm}
\label{fig:roundAnalysis}
\end{figure*}

\vspace{2mm}
\noindent{\bf Round-by-round analysis of \sik.} In Figure~\ref{fig:roundAnalysis} we compare the running times of each round of \sik\ on a selection of benchmarks. Round 1 is typically negligible, regardless of the benchmark and of the value of $k$. Round 2, which computes $2$-paths and small-neighborhoods intersections, is the most expensive step for $k\le 5$, but its running time can only decrease when $k$ grows (due to the test $|\Gamma^+(u)|\ge k-1$ performed by reduce 1 instances, see Algorithm~\ref{fi:qk-count}). 
Round 3 becomes more and more expensive as $k$ gets larger, and dominates the running time on {\tt webBerkStan} and {\tt comLiveJ} already for $k=6$. This confirms the intuition supported by our theoretical analysis: computing $(k-1)$-cliques on the subgraphs $G^+(u)$ induced by high-neighborhoods can be rather time-consuming and becomes the dominant operation as $k$ gets larger. Figure~\ref{fig:round5}a shows the cumulative distribution of $|G^+(u)|$, focusing on reduce 3 instances that required more than $100$ ms: notice that a constant fraction of nodes has rather large induced subgraphs (e.g., in {\tt egoGplus} about 5000 nodes have high neighborhoods with a number of edges in-between $2^{16}$ and $2^{18}$, which is the largest $|G^+(u)|$).

\begin{figure*}[t]
\centering
\begin{tabular}{cccc}
 \hspace{-5mm}
  \raisebox{20mm}{(a)} & \hspace{-3mm} \includegraphics[scale=0.50]{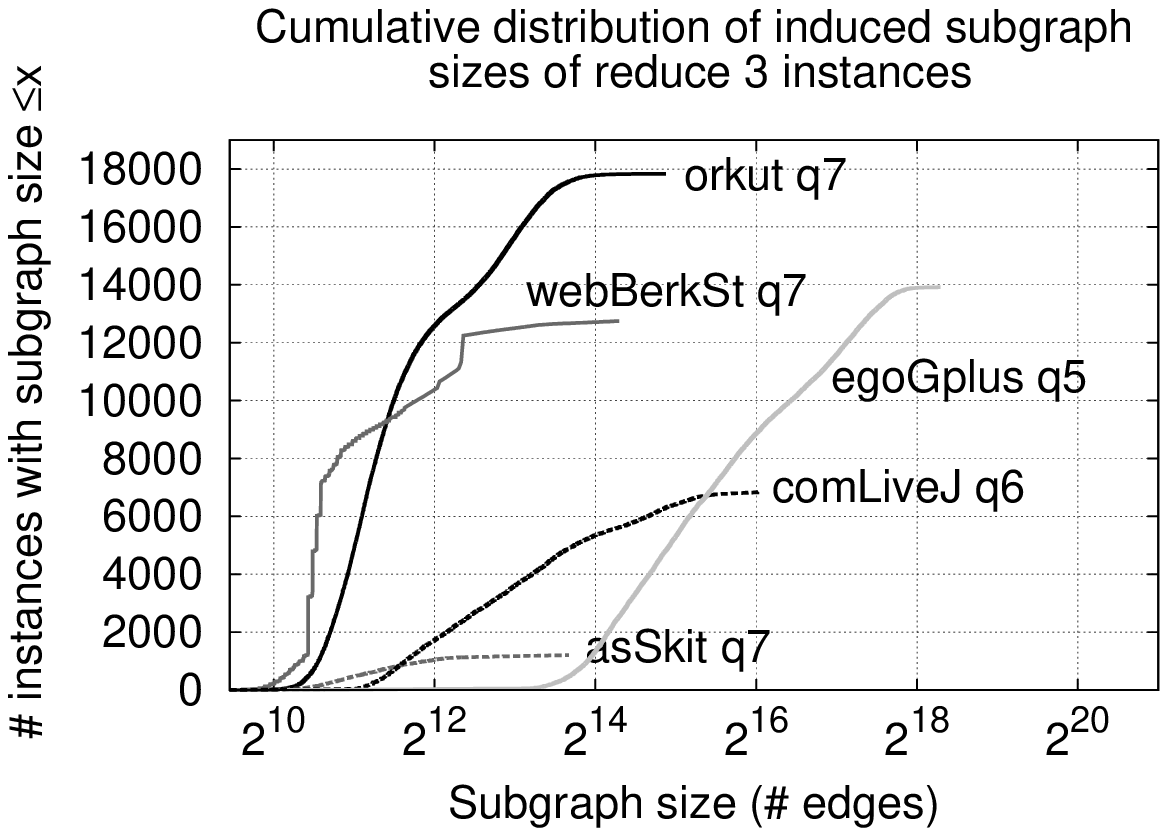} \hspace{3mm} & 
 \raisebox{20mm}{(b)} & \hspace{-3mm} \includegraphics[scale=0.50]{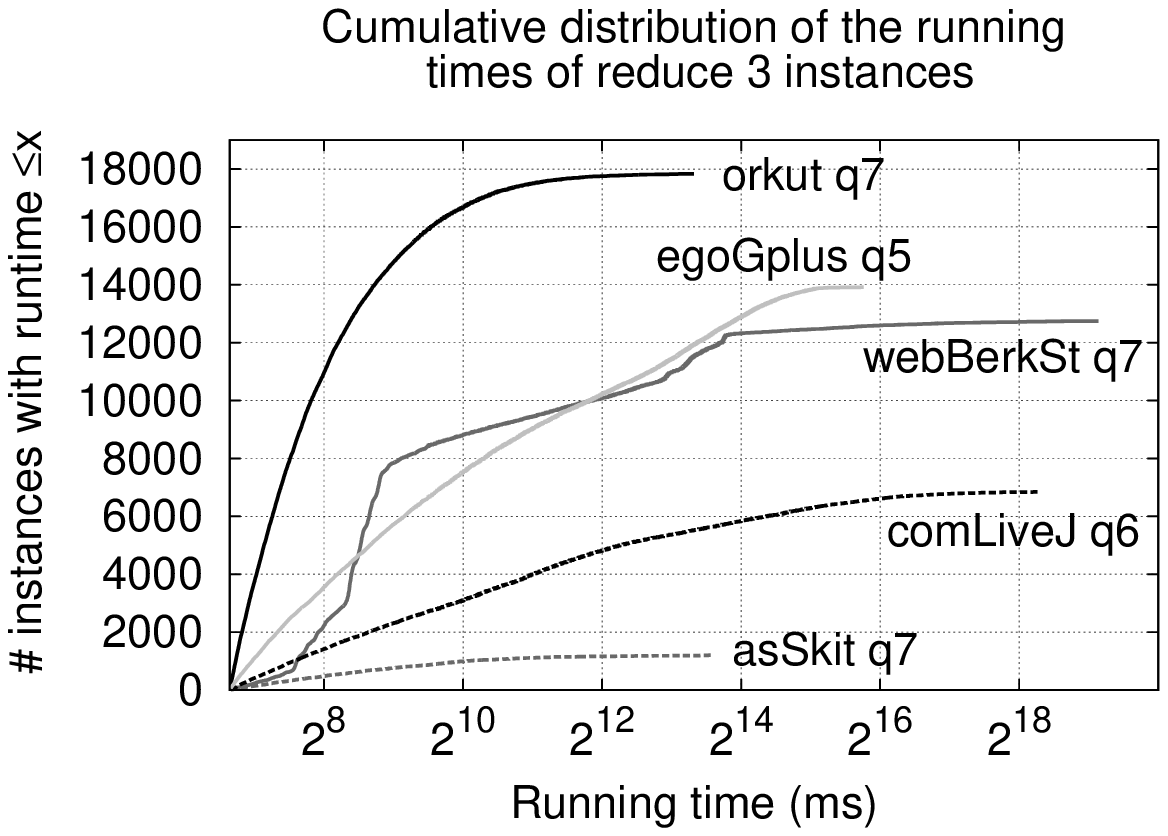} \hspace{-3mm}
\end{tabular}
\vspace{-3mm}
\caption{Analysis of reduce instances of round 3 of algorithm \sik\ on a selection of benchmarks.}
\vspace{-2mm}
\label{fig:round5}
\end{figure*}

\vspace{2mm}
\noindent{\bf Scalability on different clusters.} Since MapReduce algorithms are inherently parallel, a natural question is how their running times are affected by the cluster size (and ultimately, by the available number of cores). Figure~\ref{fig:scalability} exemplifies the running times on three clusters of 4, 8, and 16 nodes. We focus on \sik, which proved to be the algorithm of choice for $k\ge 5$. As an example, the average speedups of \siVI\ when doubling the cluster size from 4 to 8 nodes and from 8 to 16 nodes are 1.44 and 1.53, respectively (the average is taken over the three graph instances). We remark that the maximum theoretical speedup is 2. Similar values can be obtained for the other values of $k$.

While we could not experiment on considerably larger clusters (Amazon AWS limits the number of on-demand instances that can be requested), Figure~\ref{fig:round5} provides some insights. We focus on round 3, which proved to be the most expensive step when $q_k$ is large (see, e.g., \texttt{webBerkStan} and \texttt{comLiveJ} in Figure~\ref{fig:roundAnalysis}). Figure~\ref{fig:round5}b shows the cumulative distribution of the running times of reduce 3 instances (for executions longer than 100 ms). We focused on the least favorable scenarios, corresponding to large values of $k$ for which the problem is more computationally intensive. The rightmost point on each curve gives the runtime of the slowest reduce instance, that reaches 9 minutes when computing $q_7$ on \texttt{webBerkStan}. Although most curves are steeper for short durations, in all cases there are hundreds or even thousands of reduce instances with running times comparable to the slowest one: e.g., in {\tt egoGplus} more than 2000 reducers are within a factor $8\times$ of the slowest one, and even in the $q_7$ computation for \texttt{webBerkStan} 169 instances require more than one minute. This suggests that \sik\ is amenable to further parallelization: we expect that, on a larger cluster, the abundant time-demanding instances could be effectively scheduled to different nodes, yielding globally shorter running times.  This analysis is in line with the distribution of induced subgraph sizes observed in Figure~\ref{fig:round5}a, supporting the conclusion that the harmful ``curse of the last reducer'' phenomenon~\cite{SV11} -- where typically 99\% of the map/reduce instances terminate quickly, but a very long time could be needed waiting for the last task to succeed -- can be kept under control even when $k$ is increased.

\vspace{-3mm}
\section{Approximate counting}
\label{se:approximate}

\vspace{-1mm}

In this section we analyze two variants of a sampling strategy that allows us to decrease the overall space usage, starting from the output of map 2 instances.  The space saving in map 2 instances propagates to the following phases, reducing the space used by reduce 2 as well as map and reduce 3 instances, and also results in an improved running time (due to reduced local complexities and global work).
Instead of performing the sampling directly on the list of edges of the graph, we work by sampling pairs of high-neighbors that are emitted by map 2 instances. 
If each pair of high-neighbors of a given node $u$ is emitted with probability $p$, then each edge of $G^+(u)$ will be included with probability $p$ in the subgraph built by the reduce 3 instance with key $u$; however,  the same edge $e$  in two distinct subgraphs $G^+(u)$ and $G^+(u')$ is sampled independently, which results in improved concentration around the mean.

\begin{figure*}[t]
\centering
\begin{tabular}{ccc}
 \hspace{-9mm}
 \includegraphics[scale=0.45]{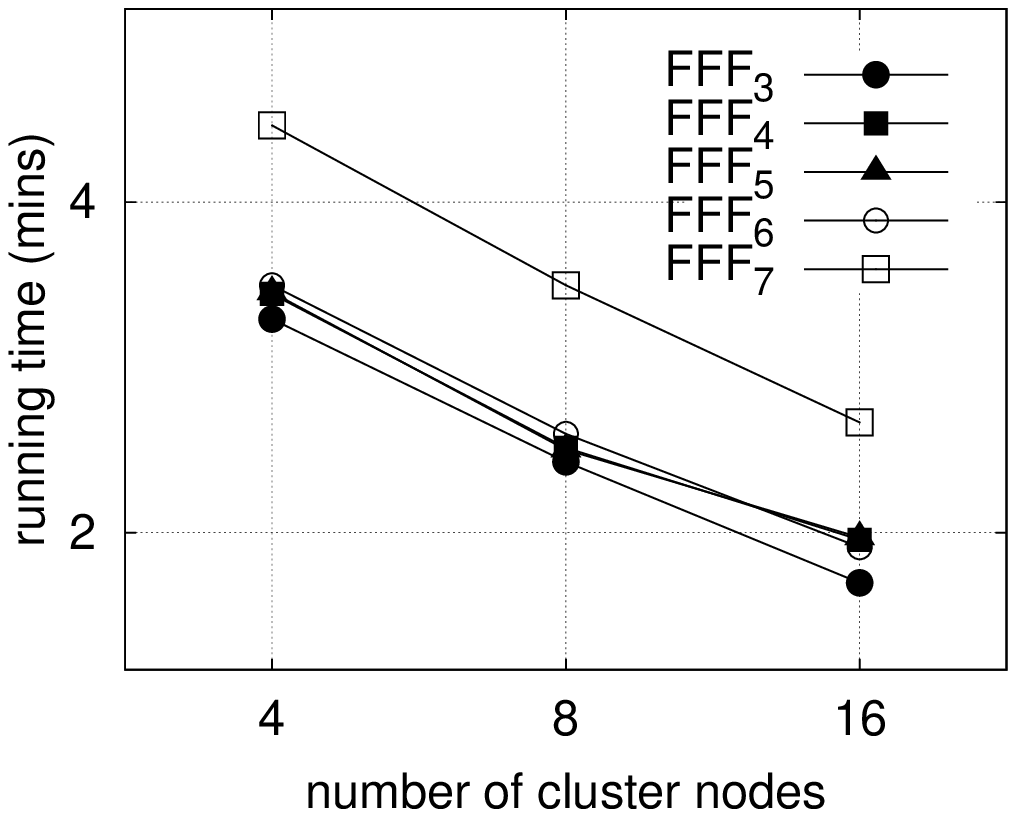} \hspace{-7mm} & 
 \includegraphics[scale=0.45]{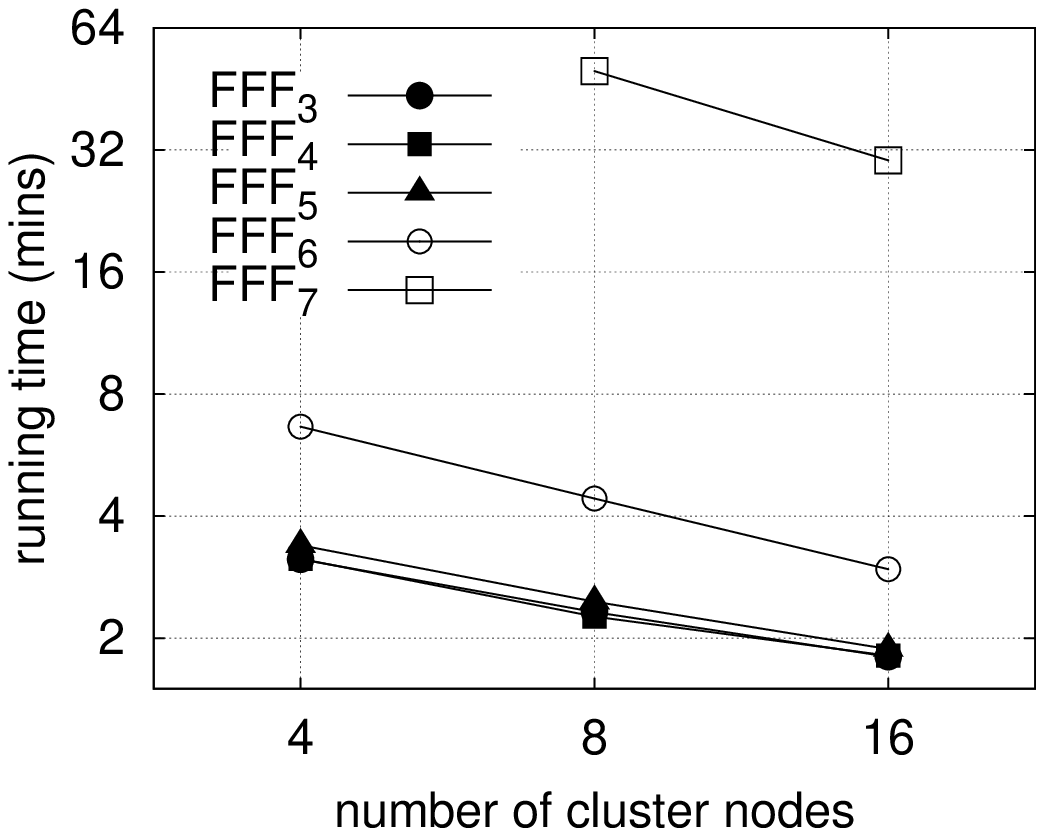} \hspace{-8mm} & 
 \includegraphics[scale=0.45]{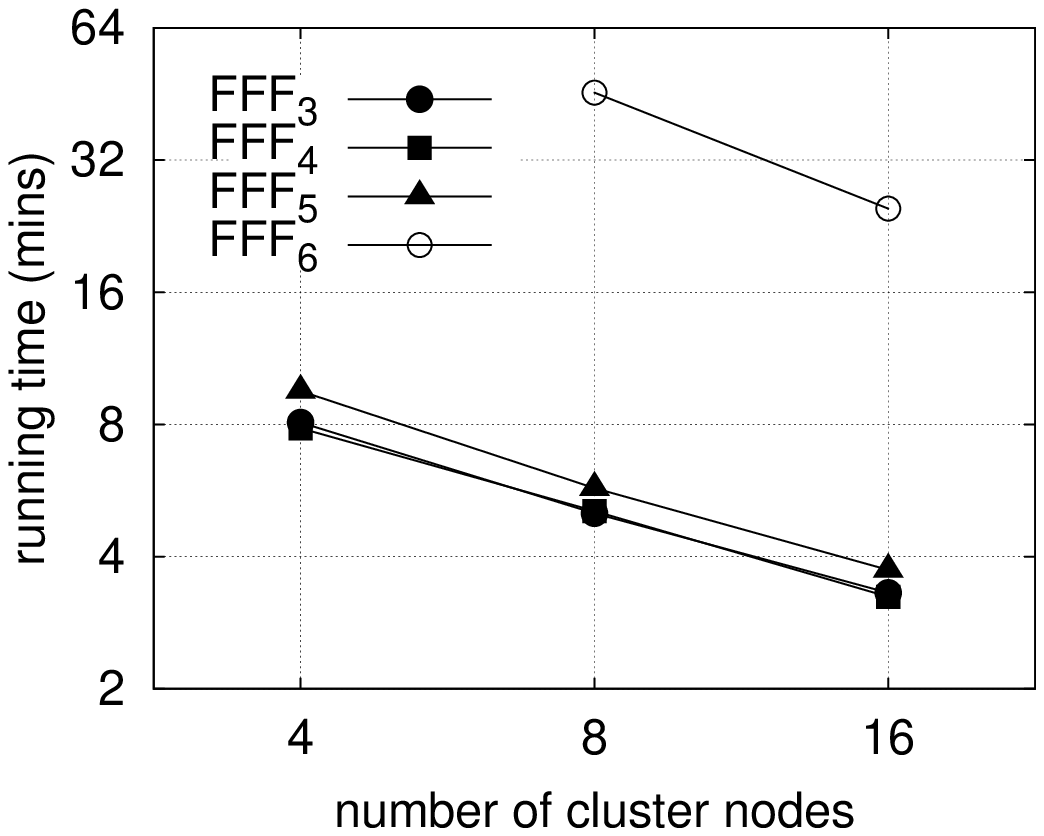}
\end{tabular}
\vspace{-4mm}
\caption{Scalability on different cluster sizes: \texttt{asSkit}, \texttt{webBerkStan}, and \texttt{comLiveJ} (left-to-right).}
\vspace{-2mm}
\label{fig:scalability}
\end{figure*}


\vspace{-3mm}
\paragraph{Plain pair sampling.}
We first address the case when pairs of high neighbors are sampled uniformly at random. \full{In details, we modify algorithm \sik\ as follows:
\begin{itemize}[leftmargin=0.4cm,rightmargin=0cm]
\itemsep0em
\item map 2 instances emit the key-value pair $\langle (x_i,x_j);u)$, for all pairs $(x_i,x_j)$ with $x_i \prec x_j$ in $\Gamma^+(u)$,  \emph{with probability $p$}; 
\item  reduce 3 instances emit the pair $\langle u;q_u/p^{(k-1)(k-2)/2})$. 
\end{itemize}
}
In details, map 2 instances emit the key-value pair $\langle (x_i,x_j);u\rangle$, for all pairs $(x_i,x_j)$ with $x_i \prec x_j$ in $\Gamma^+(u)$,  \emph{with probability $p$}, and reduce 3 instances emit the pair $\langle u;q_u/p^{(k-1)(k-2)/2}\rangle$. 
\full{Let $\mathcal{Q} = \{Q_1, \ldots, Q_{q_k}\}$ be the set of $k$-cliques of $G$. Let $Q_i\in \mathcal{Q}$ be a clique and let $u$ be its smallest node (according to $\prec$). We say that $Q_i$ is  \emph{sampled} if all pairs in $Q_i\setminus \{u\}$ were emitted by the map 2 instance with key $u$. }
The following results concerning space usage, correctness, and concentration of the estimate around its mean are proved in Appendix B.

\vspace{-1mm}

\begin{lemma}
\label{le:sampling-space}
Let $p\le 1$ be the edge sampling probability of algorithm \sik\ and let $\alpha$ be a constant in $[0,1)$ such that $1/m^\alpha \le p$. Then algorithm \sik\  with sampling probability $p$ uses local space $O(mp)$ with high probability.
\end{lemma}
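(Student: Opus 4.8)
The plan is to show that, after sampling, the number of items held by every individual reduce instance (and by the map instance feeding round~3) concentrates around $\Theta(mp)$, and then to take a union bound over all instances. First I would identify which instances actually determine the local space. Map~1, reduce~1, and the input-reading part of map~2 only ever hold a single high-neighborhood $\Gamma^+(u)$, whose size is $O(\sqrt m)$ by the bound used in the proof of Theorem~\ref{th:qk-count}; sampling does not touch these, and they are dominated by $O(mp)$ in the regime $1/\sqrt m\le p$ where sampling is space-beneficial. The sampling-sensitive and genuinely memory-hungry instances are two: (i)~reduce~3 with key $u$, which must hold the reconstructed subgraph $G^+(u)$ in order to enumerate its $(k-1)$-cliques; and (ii)~reduce~2 with key $(x_i,x_j)$, which collects the sampled witnesses $u\in\Gamma^-(x_i)\cap\Gamma^-(x_j)$ (and since map~3 merely forwards exactly this list, its local load equals that of the corresponding reduce~2 instance). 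Thus it suffices to bound the load of these two families.

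Next I would set up the two binomials. After sampling, an edge $(x_i,x_j)\in E(G^+(u))$ reaches reduce~3 with key $u$ iff map~2 emitted the pair $\langle (x_i,x_j);u\rangle$, which happens independently with probability $p$; hence the number $X_u$ of edges held by that instance is $\mathrm{Bin}(|E(G^+(u))|,p)$. Using $|\Gamma^+(u)|=O(\sqrt m)$ again, $|E(G^+(u))|\le\binom{|\Gamma^+(u)|}{2}=O(m)$, so $\mathbb{E}[X_u]\le c\,mp$ for a constant $c$. Symmetrically, each $u\in\Gamma^-(x_i)\cap\Gamma^-(x_j)$ contributes to reduce~2 with key $(x_i,x_j)$ via an independent coin, so that load $Y_{(x_i,x_j)}$ is $\mathrm{Bin}(|\Gamma^-(x_i)\cap\Gamma^-(x_j)|,p)$ with $|\Gamma^-(x_i)\cap\Gamma^-(x_j)|\le n\le 2m$, giving $\mathbb{E}[Y_{(x_i,x_j)}]\le 2mp$. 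In both cases the relevant load is a sum of independent Bernoulli$(p)$ variables with mean at most $C\,mp$ for a universal constant $C$.

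The main obstacle, and the step I would be most careful about, is that the per-instance mean $\mu$ can be far below the worst case $Cmp$, so a multiplicative bound taken around each individual $\mu$ does not directly yield one clean threshold valid for all instances. I would handle this with the upper-tail form $\Pr[Z\ge a]\le (e\mu/a)^{a}$, valid for every $a\ge\mu$, applied with the common threshold $a=2eC\,mp$: since $\mu\le Cmp$ we have $a\ge\mu$ and $e\mu/a=\mu/(2Cmp)\le 1/2$, so $\Pr[Z\ge 2eC\,mp]\le 2^{-2eC\,mp}$ uniformly over all instances, irrespective of their individual means.

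Finally I would invoke the hypothesis and union-bound. Because $\alpha\in[0,1)$ and $1/m^\alpha\le p$, we have $mp\ge m^{1-\alpha}$, which is polynomially large in $m$; hence $2^{-2eC\,mp}\le 2^{-2eC\,m^{1-\alpha}}$ decays faster than any inverse polynomial. Union-bounding over the at most $m$ relevant reduce~2 keys, the at most $n\le 2m$ reduce~3 keys, and the identical map~3 loads leaves a total failure probability of $o(1)$, in fact $m^{-\Omega(1)}$, so that \emph{every} instance simultaneously holds $O(mp)$ items with high probability. This establishes the $O(mp)$ local-space bound of Lemma~\ref{le:sampling-space}; note that the constant $\alpha<1$ is used in exactly one place, namely to force $mp$ to be polynomially large so that the exponential per-instance tail survives the union bound.
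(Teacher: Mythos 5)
Your proof is correct and takes essentially the same route as the paper's: identify the sampling-affected instances, bound each load as a sum of independent Bernoulli$(p)$ variables with mean $O(mp)$ via $|\Gamma^+(u)|\le 2\sqrt{m}$ (and $n\le 2m$ for reduce~2/map~3), apply a Chernoff-type tail, and union-bound, with the hypothesis $p\ge 1/m^\alpha$ serving exactly to make $mp$ polynomially large so the exponential tail survives the union bound. The only differences are refinements of detail: the paper proves the bound explicitly just for reduce~3 (dispatching reduce~2 and map~3 with ``similar arguments'') and tacitly dominates each load by $\mathrm{Bin}(2m,p)$ before applying its multiplicative Chernoff inequality, whereas you use the $\left(e\mu/a\right)^{a}$ form to handle instances with small means uniformly --- a slightly more careful treatment of the same step.
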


\full{We now focus on correctness, proving that the mean value of the estimate of algorithm \sik\ with edge sampling is the number of $k$-cliques of its input graph $G$.}

\full{
\vspace{-3mm}
\begin{lemma}
\label{le:mean}
Algorithm \sik\ with edge sampling returns an estimate $\widetilde{q_k}$ with expected value $E[\widetilde{q_k}]=q_k$, where $q_k$ is the number of $k$-cliques in the input graph $G$.
\end{lemma}
}
\full{We conclude the analysis by providing the conditions under which the estimate $\widetilde{q_k}$ is concentrated around its mean value.}

\vspace{-3mm}
\begin{theorem}\label{th:concentrationEdgeSampling}
Let $G$ be a graph with $m$ edges and $q_k$ $k$-cliques. Let $\widetilde{q_k}$ be the estimate returned by algorithm \sik\ with edge sampling probability $p$.
For any constant $\varepsilon >0$, there exists a constant $h>0$ such that  $|\widetilde{q_k}-q_k| \le \varepsilon q_k$ with high probability if $p^{(k-1)(k-2)/2} >\frac{h m^{(k-3)/2} \ln m}{\varepsilon^2 q_k}$.
\end{theorem}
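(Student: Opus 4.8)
The plan is to work with the integer random variable $Y$ counting the \emph{sampled} $k$-cliques, so that the returned estimate is $\widetilde{q_k}=Y/P$ where $P=p^{(k-1)(k-2)/2}=p^{\binom{k-1}{2}}$ is the survival probability of a single clique (all $\binom{k-1}{2}$ of its non-root pairs must be emitted). Since it is shown in Appendix~B that $E[\widetilde{q_k}]=q_k$, we have $E[Y]=Pq_k=:\mu$, and the statement reduces to the concentration bound $\Pr[\,|Y-\mu|\ge\varepsilon\mu\,]\le m^{-\Omega(1)}$. Writing $Y=\sum_{i=1}^{q_k}X_i$ with $X_i$ the indicator that clique $Q_i$ is sampled, the structural fact I would use is the one noted just before the lemma: the underlying random bits are the independent Bernoulli$(p)$ atoms $B_{(x,y,u)}$, one per pair $(x,y)$ emitted by the map~2 instance with key $u$. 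Thus $X_i$ is the product of the $\binom{k-1}{2}$ atoms attached to its root, and $X_i,X_j$ are independent unless $Q_i,Q_j$ share the same root and at least two further nodes.

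Two quantities, both of order $m^{(k-3)/2}$, drive the argument. First, the influence of a single atom: flipping $B_{(x,y,u)}$ changes $Y$ by at most the number of $k$-cliques rooted at $u$ through the pair $(x,y)$, i.e.\ the number of $(k-3)$-cliques in the common high-neighborhood of $u,x,y$; since $|\Gamma^+(u)|=O(\sqrt m)$ (Appendix~B), this gives a bounded difference $c=O(m^{(k-3)/2})$. Second, the variance. Using $\mathrm{Cov}(X_i,X_j)=P^2\bigl(p^{-\binom{t}{2}}-1\bigr)$ for two cliques sharing the same root and exactly $t\ge 2$ non-root nodes, I would factor $\sum_{i\ne j}\mathrm{Cov}(X_i,X_j)\le\sum_i E[X_i]\,\Phi_i$ with $\Phi_i=\sum_{t=2}^{k-2}p^{\binom{k-1}{2}-\binom{t}{2}}\cdot\#\{C':|V(C')\cap Q_i|=t\}$, and bound the number of $(k-1)$-cliques $C'$ overlapping $Q_i$ in $t$ nodes by $O(m^{(k-1-t)/2})$ (again via $|\Gamma^+(u)|=O(\sqrt m)$). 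Because $p\le 1$ and $(k-1-t)/2\le(k-3)/2$ for $t\ge 2$, each $\Phi_i=O(m^{(k-3)/2})$, so $\sum_{i\ne j}\mathrm{Cov}(X_i,X_j)=O(\mu\,m^{(k-3)/2})$ and, adding $\sum_i\mathrm{Var}(X_i)\le\mu$, we get $\mathrm{Var}(Y)=O(\mu\,m^{(k-3)/2})$.

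With $c$ and $\mathrm{Var}(Y)$ in hand, I would view $Y$ as a function of the independent atoms and apply a variance-sensitive (Bernstein/Freedman-type) bounded-difference inequality, obtaining
\[
\Pr[\,|Y-\mu|\ge\varepsilon\mu\,]\le 2\exp\!\left(-\Omega\!\left(\frac{\varepsilon^2\mu^2}{\mathrm{Var}(Y)+c\,\varepsilon\mu}\right)\right)=2\exp\!\left(-\Omega\!\left(\frac{\varepsilon^2\mu}{m^{(k-3)/2}}\right)\right).
\]
Choosing $h$ large enough so that the hypothesis $\mu=Pq_k> h\,m^{(k-3)/2}\ln m/\varepsilon^2$ forces the exponent above $c'\ln m$ then yields the high-probability bound; passing back to $\widetilde{q_k}=Y/P$ and recalling $\mu=Pq_k$ reproduces exactly the stated threshold $p^{(k-1)(k-2)/2}>h\,m^{(k-3)/2}\ln m/(\varepsilon^2 q_k)$. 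As a sanity check, for $k=3$ distinct triangles use distinct atoms, the $X_i$ are independent, $c=O(1)$ and $\mathrm{Var}(Y)=O(\mu)$, so an ordinary Chernoff bound suffices and recovers the $k=3$ condition $pq_3>h\ln m/\varepsilon^2$.

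I expect the concentration step, rather than the moment estimates, to be the main obstacle. Because an individual reducer's count $Y_u$ of cliques rooted at $u$ can be as large as $m^{(k-1)/2}$, a node-level Bernstein bound carries too large a maximum, whereas plain McDiarmid at the atom level pays $\sum_a c_a^2$ and is far too weak. The delicate point is therefore to run the variance-sensitive inequality at the atom level, controlling the predictable quadratic variation of the Doob martingale by $O(\mathrm{Var}(Y))=O(\mu\,m^{(k-3)/2})$ almost surely rather than by the crude worst case; ensuring this conditional-variance proxy matches the computed variance is where I anticipate the real care is needed.
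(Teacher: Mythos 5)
Your moment computations are sound: the atom-level bounded difference $c=O(m^{(k-3)/2})$, the covariance factorization $\mathrm{Cov}(X_i,X_j)=P^2(p^{-\binom{t}{2}}-1)$, the overlap counts, and $\mathrm{Var}(Y)=O(\mu\,m^{(k-3)/2})$ all check out, as does your $k=3$ sanity check. The genuine gap is exactly the step you flag at the end: the ``variance-sensitive bounded-difference inequality'' with $\mathrm{Var}(Y)+c\,\varepsilon\mu$ in the denominator is not an off-the-shelf result for functions of independent variables. Freedman/Bernstein-type martingale inequalities charge the \emph{predictable quadratic variation} $V=\sum_a \mathrm{Var}(\Delta_a\mid\mathcal{F}_{a-1})$ of the Doob martingale, and only an almost-sure (or separately certified high-probability) bound on $V$ enters the exponent. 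Here $E[V]=\mathrm{Var}(Y)$, but the deterministic bound is much worse: writing $\Delta_a=(B_a-p)W_a$ with $W_a\le N_a=O(m^{(k-3)/2})$ the number of cliques through atom $a$, one only gets $V\le p\sum_a N_a^2 = O(p\,q_k\,m^{(k-3)/2})$, which exceeds your target $O(\mu\,m^{(k-3)/2})=O(p^{\binom{k-1}{2}}q_k\,m^{(k-3)/2})$ by the factor $p^{1-\binom{k-1}{2}}$ --- enormous for small $p$ once $k\ge 4$ (it is $1$ only at $k=3$, which is why your triangle check goes through). With the crude bound the exponent becomes $\Omega(\varepsilon^2 p^{2\binom{k-1}{2}-1}q_k/m^{(k-3)/2})$, yielding a strictly stronger sampling threshold than the theorem states. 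Upgrading the in-expectation bound on $V$ to an almost-sure or w.h.p.\ bound requires concentration for $V$ itself, which is again a polynomial of the same atoms (of degree up to $2\binom{k-1}{2}-2$); this is the Kim--Vu / ``infamous upper tail'' recursion, which your proposal does not carry out. So the concentration step, as written, fails: it is a missing idea, not a matter of technical care.

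The paper sidesteps the martingale route entirely with a chromatic decomposition. It defines an interference graph $H$ whose nodes are the $q_k$ cliques, joining $Q_i$ and $Q_j$ iff they have the same smallest node $u$ and $Q_i\setminus\{u\}$, $Q_j\setminus\{u\}$ share an edge; since $|\Gamma^+(u)|=O(\sqrt m)$, the maximum degree of $H$ is $O(m^{(k-3)/2})$ (your variance bookkeeping essentially re-derives this degree bound). The Hajnal--Szemer\'edi theorem then yields an equitable coloring with $C=O(m^{(k-3)/2})$ color classes, each of size $\Theta(q_k/m^{(k-3)/2})$. Within a class the indicators use pairwise disjoint sets of sampling atoms, hence are mutually independent and identically distributed, so the plain Chernoff bound applies per class with mean $\mu_i=\Theta(p^{(k-1)(k-2)/2}q_k/m^{(k-3)/2})$; a union bound over the $C$ classes gives failure probability $c_1 m^{(k-3)/2}e^{-c_2\varepsilon^2 q_k p^{(k-1)(k-2)/2}/m^{(k-3)/2}}$, and imposing that this be at most $1/m$ produces exactly the stated threshold. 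If you want to salvage your outline, replace the Bernstein step by this partition-into-independent-subfamilies argument; everything before it in your proposal can stay.
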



\vspace{-6mm}
\paragraph{Color-based sampling.} The authors of~\cite{PT12} proposed a sampling technique that allows to increase the expected number of sampled cliques without increasing the number of sampled edges. This is achieved by coloring the nodes of the graph and sampling all monochromatic edges.
The same idea can be applied in our setting to sample the emitted pairs in map 2 instances by coloring all nodes in each $\Gamma^+(u)$ with $c$ colors and emitting all monochromatic pairs. This has the following implications. Each edge in $G^+(u)$ is sampled with probability $1/c$. A $k$-clique $Q_i$ with smallest node $u$ is sampled with probability $1/c^{k-2}$, given by the probability of assigning all nodes in $Q_i\setminus \{u\}$ with the same color. Hence, reduce 3 instances have to be modified in order to return $\langle u;q_u c^{k-2}\rangle$ as partial estimates of the number of $k$-cliques of $G$. Let us call the resulting approximation algorithm \sick.

The analysis of plain pair sampling can be naturally extended to \sick\ as described in Appendix B. Concentration around the mean is achieved with high probability under the following conditions:\full{expressed by the following theorem:}

\vspace{-1mm}
\begin{theorem}\label{th:concentrationColorSampling}
Let $G$ be a graph of $m$ edges and $q_k$ $k$-cliques.  Let $\widetilde{q_k}$ be the estimate returned by algorithm \sick\ with $c$ colors.
For any constant $\varepsilon >0$ there exist a constant $h>0$, such that $|\widetilde{q_k}-q_k| \le \varepsilon q_k$ with high probability if $1/c^{k-2} >\frac{h m^{k-2} \ln m}{\varepsilon^2 q_k}$.
\end{theorem}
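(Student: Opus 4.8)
The plan is to follow the second-moment approach already used for Theorem~\ref{th:concentrationEdgeSampling} and adapt it to the correlations introduced by the coloring. Write $\mathcal{Q}=\{Q_1,\dots,Q_{q_k}\}$ for the set of $k$-cliques and, for each $i$, let $X_i$ be the indicator that $Q_i$ is sampled, i.e.\ that every node of $Q_i\setminus\{u_i\}$ receives the same color in the map~2 instance keyed by its smallest node $u_i$. Since $|Q_i\setminus\{u_i\}|=k-1$, we have $\Pr[X_i=1]=1/c^{k-2}$, so the estimate $\widetilde{q_k}=c^{k-2}\sum_i X_i$ satisfies $E[\widetilde{q_k}]=q_k$; this reproves unbiasedness and is the base case of the computation.

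First I would pin down the covariance structure, which is where color sampling departs from edge sampling. The colorings of two distinct high-neighborhoods $\Gamma^+(u)$ and $\Gamma^+(u')$ are drawn independently, so $\mathrm{Cov}(X_i,X_j)=0$ whenever $Q_i$ and $Q_j$ have different smallest nodes. When $u_i=u_j=u$ and $|Q_i\cap Q_j|=a$, the two monochromatic sets must agree on their $a-1$ shared non-$u$ nodes, forcing all $2k-a-1$ nodes of $(Q_i\cup Q_j)\setminus\{u\}$ to carry a single color; a short computation then gives $\mathrm{Cov}(X_i,X_j)=c^{-(2k-4)}\bigl(c^{\,a-2}-1\bigr)$, which vanishes for $a\le 2$ and is strictly positive only for $a\ge 3$. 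Summing, $\mathrm{Var}(\widetilde{q_k})\le c^{k-2}q_k+\sum_{a=3}^{k-1}P_a\,c^{\,a-2}$, where $P_a$ counts ordered pairs of distinct $k$-cliques sharing their smallest node and exactly $a$ vertices.

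The heart of the argument is to bound the $P_a$ and the per-neighborhood load. For the $P_a$ I would use the clique-counting estimates of Appendix~B: the $a$ common vertices of such a pair form an $a$-clique extended in two ways to a $k$-clique, so $P_a$ is controlled by the number of $a$-cliques times the square of the number of clique extensions, yielding a bound polynomial in $m$ and linear in $q_k$ (e.g.\ $P_{k-1}=O(m\,q_k)$ via $\sum_S t_S = k\,q_k$ and $t_S\le n\le 2m$, where $t_S$ is the number of extensions of a $(k-1)$-clique $S$). To upgrade Chebyshev to a high-probability (with a $\ln m$ factor) statement, I would exploit the independence across smallest nodes: regroup $\widetilde{q_k}=\sum_u Z_u$ with $Z_u=c^{k-2}q_u$ independent over $u$, and apply a Bernstein/Chernoff bound. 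This needs both $\sigma^2=\sum_u\mathrm{Var}(Z_u)=\mathrm{Var}(\widetilde{q_k})$ and a uniform bound $M\ge\max_u Z_u$; since $q_u$ counts monochromatic $(k-1)$-cliques inside $G^+(u)$ and a single color class carries at most $O(m^{(k-1)/2})$ of them, we get $M=c^{k-2}\cdot O(m^{(k-1)/2})$. Requiring the Bernstein exponent to exceed $\delta\ln m$ splits into a variance condition $\sigma^2\lesssim\varepsilon^2 q_k^2/\ln m$ and a load condition $M\lesssim\varepsilon q_k/\ln m$; both are implied by $1/c^{k-2}>h\,m^{k-2}\ln m/(\varepsilon^2 q_k)$ for a suitable $h$ (using $\varepsilon^2\le\varepsilon$ and $m^{(k-1)/2}\le m^{k-2}$), which is the claimed threshold.

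The main obstacle I expect is the variance bookkeeping: correctly enumerating and bounding the overlapping-clique counts $P_a$ for all $3\le a\le k-1$ in terms of $m$ and $q_k$, and reconciling the variance term with the maximum-load term so that a single clean threshold of order $m^{k-2}\ln m/(\varepsilon^2 q_k)$ dominates. This is precisely the step where color sampling is harder than edge sampling: monochromatic cliques sharing a color are correlated and an entire color class of a high-neighborhood flips together, so both the covariance sum and the per-reducer load are inflated, accounting for the gap between the $m^{(k-3)/2}$ of Theorem~\ref{th:concentrationEdgeSampling} and the $m^{k-2}$ obtained here.
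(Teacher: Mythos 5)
Your proposal is correct, but it follows a genuinely different route from the paper. The paper proves Theorem~\ref{th:concentrationColorSampling} by recycling the machinery of Theorem~\ref{th:concentrationEdgeSampling}: it builds an interference graph $H$ on the $q_k$ cliques, now declaring two cliques adjacent as soon as they share the minimum node plus \emph{one} further node (rather than an edge), applies the Hajnal--Szemer\'edi theorem to partition $H$ into $O(m^{k-2})$ color classes of mutually independent indicators, and finishes with the weak Chernoff bound per class plus a union bound. The change of adjacency from ``share an edge'' to ``share a node'' is forced by a subtlety your covariance computation detects but does not fully capture: under coloring, cliques sharing a single non-minimum node have $\mathrm{Cov}(X_i,X_j)=0$, yet chains of such cliques are \emph{not} mutually independent (e.g.\ three monochromatic pairs $\{x,y\},\{y,z\},\{z,x\}$ are pairwise independent but not jointly so), and Chernoff needs mutual independence — hence the paper's coarser adjacency and the degree bound $O(m^{k-2})$, which is where the stated threshold comes from. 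Your route sidesteps this issue entirely and is sound: you block by minimum node, so that $\widetilde{q_k}=\sum_u Z_u$ with $Z_u$ genuinely independent across $u$ (the colorings of distinct $\Gamma^+(u)$ are independent), absorb all within-block dependence into the variance via your covariance formula $\mathrm{Cov}(X_i,X_j)=c^{-(2k-4)}(c^{a-2}-1)$ (which checks out) together with the overlap counts $P_a\lesssim q_k\binom{k-1}{a-1}(2\sqrt m)^{k-a}$ that follow from Theorem~\ref{le:high-neighborhood}, and into the sup-norm bound $Z_u\le c^{k-2}\cdot O(m^{(k-1)/2})$, then apply Bernstein. I verified the final reconciliation: under $1/c^{k-2}>h\,m^{k-2}\ln m/(\varepsilon^2 q_k)$ the hypothesis forces $q_k\gtrsim m^{k-2}\ln m$, and both your variance condition (including all cross terms $P_a c^{a-2}$, even with the coarse $t_S\le n\le 2m$ bounds) and your load condition $M\lesssim \varepsilon q_k/\ln m$ follow, using $m^{(k-1)/2}\le m^{k-2}$ for $k\ge 3$. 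What each approach buys: the paper's proof is essentially free given Theorem~\ref{th:concentrationEdgeSampling}, while yours requires Bernstein (not among the paper's stated tools) and more bookkeeping, but it avoids the mutual-independence/equitable-coloring machinery altogether and is actually sharper — your load term $m^{(k-1)/2}$ and the $P_a$ bounds would support a weaker sufficient condition than the paper's $m^{k-2}$ threshold, which you deliberately relax to match the statement.
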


\vspace{-1mm}
For the case $k=3$, Theorem~\ref{th:concentrationColorSampling} guarantees (with high probability) concentration around the mean when \full{$1/c \ge \frac{h m \ln m}{\varepsilon^2 q_3}$} $1/c \ge (h m \ln m)/(\varepsilon^2 q_3)$, which improves the bound in~\cite{PT12}, whose worst-case analysis imposes \full{$1/c^2 \ge \frac{h' n^2 \ln n}{\varepsilon^2 q_3}$} $1/c^2 \ge (h' n^2 \ln n)/(\varepsilon^2 q_3)$ to guarantee concentration  on an $n$-node graph. \full{This is due to the fact that we color the same node $x\in \Gamma^+(u)\cap \Gamma^+(v)$ independently for $u$ and $v$, which allows us to reduce the maximum degree of the interference graph $H$ in the application of the Hajnal-Szemeredi theorem.}

{\vspace{2mm}
\noindent{\bf Discussion.}}
Our estimators fit in the class $\mathcal{MRC}$ as long as $p\le 1/m^\alpha$ (and equivalently $c\ge m^\alpha$), for a small constant $\alpha$, as shown by Lemma~\ref{le:sampling-space}. Moreover, the space reduction translates in improved bounds for the local complexities (in particular for reduce 3 instances that are the most computationally intensive) and work.
Notice that the concentration result in Theorem~\ref{th:concentrationColorSampling} is weaker than that in Theorem~\ref{th:concentrationEdgeSampling}. \full{This is due to the fact that the interference graph used in the proof of Theorem~\ref{th:concentrationEdgeSampling} has an even smaller maximum degree than that resulting from the color-based sampling. As observed above, this is beneficial in the application of the Hajnal-Szemeredi theorem.} However, the color-based sampling strategy increases the expected number of sampled cliques, using sampling probability $p=1/c$, with respect to plain pair sampling. The expected number of sampled cliques shrinks by a factor $p^{(k-1)(k-2)/2}$ for plain sampling, and only by a factor $p^{k-2}$ for color-based sampling. In practice, this boosts the accuracy of the color-based algorithm, which becomes better and better than plain sampling when $k$ grows. We clearly observed this phenomenon, which was also discussed in~\cite{PT12} for triangles, in our experiments.

\begin{table}[t]
\centering
\begin{footnotesize}
\begin{tabular}{|c|ccc|ccc|ccc|crc|ccc|}\hline
 & \multicolumn{3}{c|}{\sic{3}} & \multicolumn{3}{c|}{\sic{4}}  & \multicolumn{3}{c|}{\sic{5}}  & \multicolumn{3}{c|}{\sic{6}}  & \multicolumn{3}{c|}{\sic{7}}\\ 
 \hline
 & {\scriptsize{time}} & {\scriptsize{sp.}} & {\scriptsize{err.}} & {\scriptsize{time}} & {\scriptsize{sp.}} & {\scriptsize{err.}}  & {\scriptsize{time}} & {\scriptsize{sp.}} & {\scriptsize{err.}} & {\scriptsize{time}} & \multicolumn{1}{c}{\scriptsize{sp.}} & {\scriptsize{err.}} & {\scriptsize{time}} & {\scriptsize{sp.}} & {\scriptsize{err.}} \\\hline \hline
{\texttt{asSkit}} & 2:53 & 1.12 & 0.01 & 2:51 & 1.15 & 0.23 & 2:49 & 1.17 & 0.86 & 2:49 & 1.15 & 4.39 & 2.44 & 1.54 & 1.06\\
{\texttt{orkut}} & 6:08 & 3.91 & 0.02 & 5:52 & 3.94 & 0.05 & 6:09 & 3.77 & 0.08 & 5:57 & 3.93 & 0.34 & 6:30 & 4.33 & 2.39\\
{\texttt{webBerkSt}} & 2:45 & 1.09 & 0.05 & 2:46 & 1.09 & 0.16 & 2:47 & 1.13 & 0.17 & 2:42 & 1.83 & 1.22 & 2:44 & 18.4 & 0.39\\
{\texttt{comLiveJ}} & 3:24 & 1.62 & 0.05 & 3:27 & 1.57 & 0.03 & 3:29 & 1.78 & 0.21 & 3:25 & 12.11 & 0.24 & 3:22 & - & -\\
{\texttt{socLiveJ1}} & 3:42 & 1.77 & 0.02 & 3:40 & 1.83 & 0.03 & 3:31 & 2.23 & 0.03 & 3:36 & 24.05 & 0.61 & 3:41 & - & - \\
{\texttt{egoGplus}} & 4:18 & 4.03 & 0.01 & 4:18 & 4.16 & 0.02 & 4:17 & 9.11 & 0.08 & 4:36 & \multicolumn{1}{c}{-} & - & 5:34 & - & -\\
\hline
\end{tabular}
\end{footnotesize}
\vspace{-3mm}
\caption{Running time, speedup, and approximation quality of \sick\ for $k\in [3,7]$. The speedup is with respect to \sik\ (see also Table~\ref{fig:runtime-15nodes}) and the error is given as a percentage.}
\vspace{-3mm}
\label{fig:approximate}
\end{table} 


\vspace{-3mm}
\paragraph{Experiments with approximate counting.}
We experimented with both edge-based and color-based sampling, choosing different sampling probabilities and running each algorithm three times on the same instance and platform configuration to increase the statistical confidence of our results.
We briefly report on the results obtained by algorithm \sick, whose accuracy  in practice outperformed edge sampling. 
As predicted by the theoretical analysis, sampling is beneficial for round 2, since reduces the number of emitted $2$-paths. In turn, this decreases the number of edges in the induced subgraphs constructed at round 3, yielding substantial benefits on the running time of this round: in particular, in all our tests we observed that  the running time of reduce 3 instances of \sick\ remains almost constant as $k$ increases.
Table~\ref{fig:approximate} summarizes the behavior of \sick, showing elapsed time, speedup over the exact algorithm, and accuracy. The experiments were performed on the 16-node cluster using 10 colors, which corresponds to a sampling probability $0.1$. The achieved speedups are dramatic (up to 24$\times$ in \texttt{socLiveJ1}) in all those cases where the exact algorithm took a long time. We were  able to compute in a few minutes the estimated number of $q_6$ and $q_7$ of graphs where the exact computation would have required several hours.
The accuracy is very good, especially on the  datasets that were most difficult for the exact algorithm: the 24$\times$ faster computation on {\tt socLiveJ1}, for instance, returned an estimate that was only 0.61\% away from the exact value of  $q_6$. 

\vspace{-3mm}

\section{Concluding remarks}

\vspace{-1mm}

We have proposed and analyzed, both theoretically and experimentally, a suite of MapReduce algorithms for counting $k$-cliques in large-scale undirected graphs, for any constant $k\ge 3$.  Our experiments, conducted on the Amazon EC2 platform, clearly highlight the algorithm of choice in different scenarios, showing that our algorithms gracefully scale to non-trivial values of $k$, larger instances, and diverse cluster sizes. 
\full{While our exact algorithms can enumerate the number of cliques incident on each node, our approximate solutions return with high probability very accurate estimates of the total number of $k$-cliques, exploiting two different random sampling strategies. All the algorithms work in three rounds. We implemented several variants of our $k$-clique estimators in Hadoop 2.2.0 and tested them on widely-used networks from the SNAP graph library~\cite{SNAP}. The experiments highlight the main performance bottleneck in the exact algorithms, and show how randomization can alleviate this problem, dramatically reducing the running times with a negligible loss of precision and smoothly scaling on different values of $k$, larger instances, and diverse cluster sizes.
Our experimental analysis shows that reduce 3 instances might need to perform rather costly sequential computations when nodes have large high-neighborhoods: when counting $(k-1)$-cliques in subgraphs $G^+(u)$, the clique size $k$ has an immediate impact on the running time of this step and, for $k=7$, we witnessed on one of the instances sequential bursts requiring up to 9 minutes. If we consider the number of nodes $u$ that yield graphs $G^+(u)$ almost as big as the largest node-induced subgraph, this number is fairly large when compared to the size of our clusters: thus, on small cluster sizes our overall running times would not benefit from further parallelization of the $(k-1)$-cliques counting in round 3. Nevertheless ...}
It is worth noticing that our approach could be slightly modified in order to trade overall space usage for local running time. The actual count of $(k-1)$-cliques at round 3 could be indeed postponed for all nodes $u$ such that $G^+(u)$ is too large. In an additional round, map instances would replicate each ``uncounted'' subgraph $G^+(u)$ once per high-neighbor $v$ of $u$, distributing the workload to many reducers. The reduce instance with key $(u,v)$ would thus count the number of $(k-2)$-cliques in its copy of $G^+(u)$. This process could be repeated up to $k-4$ times, before  copying $\sqrt{m}$ times $G^+(u)$ becomes more expensive than counting: each iteration would increase by a factor $\sqrt{m}$ the global space usage and reduce by the same factor the local running times of the reducers, without affecting the total work.  We expect this tradeoff to be rather effective  on very large clusters, especially for skewed distributions of the high-neighborhoods sizes and large values of $k$, and regard assessing its practicality as an interesting direction. 

\newpage

\section*{Acknowledgements}
This work was generously supported by Amazon Web Services through an  AWS in Education Grant Award received by the first author. We are also indebted to Emilio Coppa for many useful discussions about tuning the Hadoop configuration parameters.

\begin{small}
\bibliographystyle{abbrv}

\bibliography{biblio}
\end{small}

\newpage
\section*{Appendix A: extended related work}


Many related works focus on triangle counting, which is a fundamental algorithmic problem with a variety of applications. For instance, it is closely related to the computation of the clustering coefficient of a graph, which is in turn a widely used measure in the analysis of social networks~\cite{KMPT12}.
Many exact and approximate algorithms tailored to triangles have been developed in the literature in different computational models. The best known sequential counting algorithm is based on fast matrix multiplication~\cite{AYZ07} and has running time $O(m^{\frac{2\omega}{\omega+1}})$, where $\omega$ is the matrix multiplication exponent: this makes it infeasible even for medium-size graphs. Practical approaches match the $O(m^{3/2})$ bound first achieved in~\cite{IR78} and~\cite{CN85}, which is optimal for the listing problem. As shown in~\cite{OB14}, many listing algorithms hinge upon a common abstraction that also yields the state-of-the-art sequential implementations for the enumeration of triangles. The input/output complexity of the triangle listing problem is addressed in~\cite{PS14}. Approximate counting algorithms that operate in the data stream model, where the input graph is streamed as a list of edges and the algorithm must compute a solution using small space, are presented in~\cite{BYKS02,JG05,BFLMS06,PTTW13,BFLS07}, and a randomization technique to speed-up any triangle counting algorithm while keeping a good accuracy is proposed in~\cite{TKMF09}. 

A few works have addressed counting problems for subgraphs different from (and more difficult than) triangles. For instance, the techniques in~\cite{PTTW13}  also allow to approximate the number of small cliques, the algorithm from~\cite{BFLMS06} can be extended to any subgraph with 3 or 4 nodes~\cite{BDGL08}, while~\cite{MMPS11} tackles the problem of counting cycles. All these works focus on graph streams and return estimates, typically concentrated around the true number with high probability. The more general problem of enumerating arbitrary small subgraphs has been also studied in the data stream model~\cite{KMSS12}. 


\newpage
\section*{Appendix B: proofs}

\subsection*{Tools from the literature}

The following property, which is folklore in the triangle counting literature~\cite{SW05}, will be crucial in the \full{design and} analysis of our clique estimators (we report its short proof for completeness):

\setcounter{theorem}{0}
\renewcommand\thetheorem{\Roman{theorem}}

\begin{theorem}
\label{le:high-neighborhood}
In a graph with $m$ edges,  $|\Gamma^+(u)|\le 2\sqrt{m}$ for each node $u$.
\end{theorem}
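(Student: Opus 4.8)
The plan is to bound $|\Gamma^+(u)|$ by a degree-counting argument that exploits the defining property of the order $\prec$: every node in the high-neighborhood has degree at least as large as $u$ itself. First I would set $t = |\Gamma^+(u)|$ and record the two facts I need. Since $\Gamma^+(u) \subseteq \Gamma(u)$, we immediately get $t \le d(u)$. Moreover, for each $x \in \Gamma^+(u)$ the condition $u \prec x$ forces $d(x) \ge d(u)$: both the case $d(u) < d(x)$ and the tie-breaking case $d(u) = d(x)$ (with $u < x$) satisfy this inequality. Chaining these, $d(x) \ge d(u) \ge t$ for every $x \in \Gamma^+(u)$.

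The key step is then a handshake-style inequality: summing degrees over the high-neighborhood and comparing against the total degree sum of the graph gives
\[
t^2 \le \sum_{x \in \Gamma^+(u)} d(x) \le \sum_{v \in V(G)} d(v) = 2m,
\]
where the first inequality uses $d(x) \ge t$ for each of the $t$ summands, and the second uses nonnegativity of degrees together with $\Gamma^+(u) \subseteq V(G)$. Taking square roots yields $t \le \sqrt{2m}$, and since $\sqrt 2 < 2$ this is in particular at most $2\sqrt m$, as claimed.

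There is essentially no obstacle here; the only point demanding care is the justification that $u \prec x$ implies $d(x) \ge d(u)$, which must inspect both clauses of the definition of $\prec$ so that the tie-breaking by labels does not weaken the degree comparison. An equivalent presentation avoids the summation by a threshold argument: if $d(u) \ge \sqrt{2m}$ then every element of $\Gamma^+(u)$ lies among the at most $\sqrt{2m}$ nodes of degree $\ge \sqrt{2m}$, while if $d(u) < \sqrt{2m}$ then $t \le d(u) < \sqrt{2m}$; either way the bound follows. I would present whichever is shorter, remarking that the stated $2\sqrt m$ bound is in fact slightly loose, as $\sqrt{2m}$ already suffices.
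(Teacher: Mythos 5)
Your proof is correct, and your main argument is a mild but genuine variant of the paper's. The paper uses a two-case threshold argument at $\sqrt m$: since the degree sum is $2m$, at most $2\sqrt m$ nodes have degree exceeding $\sqrt m$; if $d(u)>\sqrt m$ then every $x\in\Gamma^+(u)$ satisfies $d(x)\ge d(u)>\sqrt m$ and so $|\Gamma^+(u)|\le 2\sqrt m$, while if $d(u)\le\sqrt m$ the bound follows trivially from $\Gamma^+(u)\subseteq\Gamma(u)$. Your primary argument rests on exactly the same two facts ($t\le d(u)$ and $d(x)\ge d(u)$ for $x\in\Gamma^+(u)$, where $t=|\Gamma^+(u)|$) but combines them multiplicatively via the handshake lemma, $t^2\le\sum_{x\in\Gamma^+(u)}d(x)\le 2m$, which eliminates the case split and sharpens the constant to $\sqrt{2m}$; your alternative presentation with cutoff $\sqrt{2m}$ is essentially the paper's proof with an optimized threshold. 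The sharper constant is harmless but cosmetic in context: every downstream use (the $O(m^{3/2})$ total space and $O(m^{k/2})$ work in Theorem 1, the $O(mp)$ local space in Lemma 1, the degree bound on the interference graph $H$ in the concentration theorems) needs only $|\Gamma^+(u)|=O(\sqrt m)$. You were also right to flag the one delicate point: because of the label tie-break in $\prec$, $u\prec x$ yields only the non-strict inequality $d(x)\ge d(u)$, and both your chain and the paper's case analysis are valid under that weaker form.
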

\begin{proof}
Let $h$ be the number of nodes with degree larger than $\sqrt{m}$: since there are $m$ edges, it must be $h\le 2\sqrt{m}$.
If $d(u)>\sqrt{m}$, then nodes in $\Gamma^+(u)$ must also have degree larger than $\sqrt{m}$ and their number is upper bounded by $h\le 2\sqrt{m}$.
If $d(u)\le\sqrt{m}$, the claim trivially holds since $\Gamma^+(u)\subseteq\Gamma(u)$. 
\end{proof}

In the analysis of our approximate estimators we make use of the following (weaker) version of the Chernoff concentration inequality~\cite{C81}:

\begin{theorem}
\label{le:chernoff}
Let $X_1, \ldots, X_h$ be independent identically distributed Bernoulli random variables, with probability of success  $p$. Let $X=\sum_{i=1}^{h} X_i$ be a random variable with expectation $\mu=p\cdot h$. Then, for any $\varepsilon\in(0,1)$, $Pr\{|X-\mu|>\varepsilon\mu\}\le 2e^{-\varepsilon^2\mu/3}$.
\end{theorem}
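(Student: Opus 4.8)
The plan is to derive this two-sided multiplicative tail bound by the standard exponential-moment (Bernstein/Markov) method, bounding the upper and lower deviations separately and then combining them by a union bound. First I would compute the moment generating function of $X$. Since the $X_i$ are independent Bernoulli$(p)$ variables, for every real $t$ we have $E[e^{tX}] = \prod_{i=1}^{h} E[e^{tX_i}] = (1-p+pe^t)^h = (1+p(e^t-1))^h$. Applying the elementary inequality $1+x \le e^x$ with $x=p(e^t-1)$ yields the single clean bound $E[e^{tX}] \le e^{ph(e^t-1)} = e^{\mu(e^t-1)}$, which is the only probabilistic ingredient required for both tails.

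For the upper tail I would apply Markov's inequality to $e^{tX}$ with $t>0$: $\Pr\{X \ge (1+\varepsilon)\mu\} \le e^{-t(1+\varepsilon)\mu}\,E[e^{tX}] \le e^{\mu(e^t-1-t(1+\varepsilon))}$. Minimizing the exponent over $t$ gives the optimal choice $t=\ln(1+\varepsilon)$ and the classical expression $\Pr\{X \ge (1+\varepsilon)\mu\} \le \bigl(e^{\varepsilon}/(1+\varepsilon)^{1+\varepsilon}\bigr)^{\mu}$. The lower tail is symmetric: running the same argument with $t<0$ (so that the inequality flips and the optimum is $t=\ln(1-\varepsilon)$) one obtains $\Pr\{X \le (1-\varepsilon)\mu\} \le \bigl(e^{-\varepsilon}/(1-\varepsilon)^{1-\varepsilon}\bigr)^{\mu}$.

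The remaining step, which I expect to be the only genuinely analytic one, is to bound both base expressions by the uniform factor $e^{-\varepsilon^2/3}$ for $\varepsilon \in (0,1)$. Concretely, for the upper tail it suffices to prove $\varepsilon-(1+\varepsilon)\ln(1+\varepsilon) \le -\varepsilon^2/3$ on $(0,1)$, and for the lower tail the analogous quantity $-\varepsilon-(1-\varepsilon)\ln(1-\varepsilon)$ is bounded even more sharply by $-\varepsilon^2/2$. Both are elementary calculus facts that follow from the series $\ln(1+x)=x-x^2/2+x^3/3-\cdots$ together with monotonicity and sign bookkeeping of the resulting remainder; this is where essentially all the work lies, since the probabilistic steps are routine. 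Because $\varepsilon^2/3 \le \varepsilon^2/2$, each one-sided probability is at most $e^{-\varepsilon^2\mu/3}$, and a union bound over the events $\{X-\mu > \varepsilon\mu\}$ and $\{X-\mu < -\varepsilon\mu\}$ (each contained in the corresponding tail event above) gives $\Pr\{|X-\mu| > \varepsilon\mu\} \le 2e^{-\varepsilon^2\mu/3}$, as claimed. Since this is merely a weaker restatement of the textbook Chernoff bound~\cite{C81}, one could alternatively invoke the cited inequality directly rather than reproducing the derivation.
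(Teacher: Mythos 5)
Your proposal is correct, but note that the paper itself offers no proof of this statement: it appears in Appendix B under ``Tools from the literature'' as a deliberately weakened form of the classical Chernoff bound and is simply imported with a citation to~\cite{C81} --- exactly the shortcut you acknowledge in your final sentence. Your blind derivation is the standard exponential-moment argument, and every step checks out: the moment generating function computation $E[e^{tX}]=(1+p(e^t-1))^h\le e^{\mu(e^t-1)}$ via $1+x\le e^x$, the optimal choices $t=\ln(1+\varepsilon)$ and $t=\ln(1-\varepsilon)$ for the two tails (with the correct flip of the inequality for $t<0$), the containments $\{X-\mu>\varepsilon\mu\}\subseteq\{X\ge(1+\varepsilon)\mu\}$ and $\{X-\mu<-\varepsilon\mu\}\subseteq\{X\le(1-\varepsilon)\mu\}$, and the final union bound. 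The only step needing real care is the analytic fact $\varepsilon-(1+\varepsilon)\ln(1+\varepsilon)\le-\varepsilon^2/3$ on $(0,1)$, which you correctly identify as where the work lies; your series sketch can be made rigorous, though the cleanest route is the inequality $\ln(1+\varepsilon)\ge 2\varepsilon/(2+\varepsilon)$, which gives $\varepsilon-(1+\varepsilon)\ln(1+\varepsilon)\le-\varepsilon^2/(2+\varepsilon)\le-\varepsilon^2/3$ for $\varepsilon\le 1$ (the lower-tail bound $-\varepsilon-(1-\varepsilon)\ln(1-\varepsilon)\le-\varepsilon^2/2$ follows directly from the series expansion, as you say, since all discarded terms are negative). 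In short: what your derivation buys over the paper's treatment is self-containedness; what the paper's citation buys is brevity, which is appropriate given that the result is textbook material used only as a tool in the concentration proofs of Theorems~\ref{th:concentrationEdgeSampling} and~\ref{th:concentrationColorSampling}.
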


We will also exploit the following conjecture of Paul Erd\"os, proved in 1970 by Hajnal and Szemeredi~\cite{HS70}:

\begin{theorem}
\label{le:hajnal-szemeredi}
Every $n$-node graph with maximum degree $\Delta$ is $(\Delta + 1)$-colorable with all color classes of size at least $n/\Delta$.
\end{theorem}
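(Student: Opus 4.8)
The plan is to cast $\widetilde{q_k}$, the estimate returned by \sick, as a sum of dependent Bernoulli indicators and to tame the dependencies through an equitable coloring of an interference graph, exactly in the spirit of the three tools collected above. Enumerate the $k$-cliques as $Q_1,\dots,Q_{q_k}$, and for each $Q_i$ let $u_i$ be its smallest node in the order $\prec$ and $R_i=Q_i\setminus\{u_i\}\subseteq\Gamma^+(u_i)$ the remaining $k-1$ nodes. Let $X_i$ be the indicator that all nodes of $R_i$ receive the same color in the coloring of $\Gamma^+(u_i)$ performed by the map~2 instance with key $u_i$; then $Q_i$ is sampled iff $X_i=1$, and $\widetilde{q_k}=c^{k-2}\sum_i X_i$. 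Since the $k-1$ nodes of $R_i$ are colored independently with $c$ colors, $\Pr[X_i=1]=c^{-(k-2)}$, whence $E[\widetilde{q_k}]=q_k$ and the target event $|\widetilde{q_k}-q_k|\le\varepsilon q_k$ is precisely concentration of $\sum_i X_i$ about its mean. First I would record this reduction and the value of the mean.

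The core of the argument is the dependency structure. Because $\Gamma^+(u)$ and $\Gamma^+(u')$ are colored with independent randomness for $u\ne u'$, the indicators $X_i$ and $X_j$ are independent unless $u_i=u_j$ and $R_i\cap R_j\ne\emptyset$, i.e.\ unless the two color tests share at least one node. I would therefore define the interference graph $H$ on vertex set $\{Q_1,\dots,Q_{q_k}\}$, joining $Q_i$ and $Q_j$ whenever they have the same smallest node and share a high-neighbor; any set of cliques that is independent in $H$ then carries mutually independent indicators. The decisive step is to bound the maximum degree $\Delta$ of $H$: fixing $Q_i$ with smallest node $u$, every neighbor must share one of the $k-1$ nodes of $R_i$, and for each fixed shared node the number of $(k-1)$-cliques of $G^+(u)$ through it is at most $\binom{|\Gamma^+(u)|}{k-2}$. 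Invoking Theorem~\ref{le:high-neighborhood} to get $|\Gamma^+(u)|\le 2\sqrt m$, this yields $\Delta=O(m^{(k-2)/2})$, which for $k\ge 3$ is a fortiori $O(m^{k-2})$; the coarser bound already suffices for the stated hypothesis.

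With $\Delta$ controlled, I would apply the Hajnal-Szemeredi theorem (Theorem~\ref{le:hajnal-szemeredi}) to color $H$ equitably with $\Delta+1$ classes, each of size $\Theta(q_k/\Delta)$, so that the indicators inside every class are i.i.d.\ Bernoulli with success probability $c^{-(k-2)}$. For a class $C_\ell$ the mean is $\mu_\ell=|C_\ell|\,c^{-(k-2)}=\Theta\!\big(q_k/(\Delta c^{k-2})\big)$, and Theorem~\ref{le:chernoff} gives $\Pr[\,|\sum_{Q_i\in C_\ell}X_i-\mu_\ell|>\varepsilon\mu_\ell\,]\le 2e^{-\varepsilon^2\mu_\ell/3}$. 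A union bound over the $\Delta+1=\mathrm{poly}(m)$ classes drives the overall failure probability below $1/\mathrm{poly}(m)$ as soon as $\varepsilon^2\mu_\ell=\Omega(\ln m)$, that is as soon as $c^{-(k-2)}=\Omega\!\big(\Delta\ln m/(\varepsilon^2 q_k)\big)$; absorbing the constants and the bound $\Delta=O(m^{k-2})$ into a single $h$ yields exactly $1/c^{k-2}>h\,m^{k-2}\ln m/(\varepsilon^2 q_k)$. Summing the per-class deviations then gives $|\widetilde{q_k}-q_k|\le\varepsilon q_k$ with high probability.

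I expect the main obstacle to be the degree bound on $H$, since it alone fixes the exponent of $m$ in the threshold and hence the strength of the result. The subtlety — and the source of the improvement over~\cite{PT12} claimed for $k=3$ — is that coloring each $\Gamma^+(u)$ independently confines every dependency to a single smallest node, so that $\Delta$ is governed by the local quantity $|\Gamma^+(u)|\le 2\sqrt m$ rather than by a global count of cliques through a node. The same scheme drives Theorem~\ref{th:concentrationEdgeSampling}: under edge sampling two cliques interfere only when they share an \emph{edge} of $G^+(u)$, a strictly stronger requirement than sharing a node, which lowers $\Delta$ by a factor $\sqrt m$ and explains why the edge-sampling bound comes out sharper than the color-based one.
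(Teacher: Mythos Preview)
Your proposal does not address the stated theorem. The statement is the Hajnal--Szemer\'edi theorem on equitable colorings, which the paper does not prove at all: it is listed among the ``tools from the literature'' and attributed to the 1970 paper of Hajnal and Szemer\'edi. What you have written is instead a proof sketch for Theorem~\ref{th:concentrationColorSampling} (concentration of the \sick\ estimator), a result that \emph{invokes} Hajnal--Szemer\'edi as a black box but says nothing about why every graph of maximum degree $\Delta$ admits a proper $(\Delta+1)$-coloring with balanced color classes.

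For what it is worth, read as a proof of Theorem~\ref{th:concentrationColorSampling} your outline matches the paper's own approach: build the interference graph on $k$-cliques sharing the smallest node and at least one further high-neighbor, bound its maximum degree using Theorem~\ref{le:high-neighborhood}, equitably color it via Hajnal--Szemer\'edi, apply Chernoff within each color class, and finish with a union bound over the classes. Your degree bound $\Delta=O(m^{(k-2)/2})$ is in fact tighter than the $O(m^{k-2})$ implicit in the paper's stated threshold, so the argument would even yield a stronger condition than the one recorded in Theorem~\ref{th:concentrationColorSampling}. But none of this constitutes a proof of the Hajnal--Szemer\'edi theorem itself.
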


We say that an event has high probability when it happens, for a graph $G$ of $m$ nodes, with probability at least $1-1/m$, for large enough $m$.

\subsection*{Exact counting }

\renewcommand\thetheorem{\arabic{theorem}}
\setcounter{lemma}{0}
\setcounter{theorem}{0}

\begin{theorem}
Let $G$ be a graph and let $m$ be the number of its edges.
Algorithm \sik\ counts the $k$-cliques of $G$ using $O(m^{3/2})$ total space and $O(m^{k/2})$  total work. The local space and the local running time of mappers and reducers are $O(m)$ and $O(m^{(k-1)/2})$, respectively.
\end{theorem}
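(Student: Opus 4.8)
The plan is to analyze the algorithm round by round, bounding the work and space contributed by each map/reduce phase separately, and then taking the maximum (for local quantities) or the sum (for global quantities). The single most important tool is Theorem~\ref{le:high-neighborhood}, which guarantees $|\Gamma^+(u)|\le 2\sqrt{m}$ for every node $u$; this bound is what converts the naive per-node costs into the stated polynomial bounds in $m$, and it is what makes the total space independent of $k$. First I would dispose of Round~1: mappers emit one pair per oriented edge (total $O(m)$ output), and the reduce instance with key $u$ simply aggregates $\Gamma^+(u)$, so local space is $O(d(u))=O(m)$, local time is linear in the neighborhood size, and total work and space are $O(m)$.

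Next I would handle Round~2, which is where the communication cost is determined. The map~2 instance with input $\langle u;\Gamma^+(u)\rangle$ emits one pair for each of the $\binom{|\Gamma^+(u)|}{2}$ ordered pairs in the high-neighborhood. Using $|\Gamma^+(u)|\le 2\sqrt{m}$, each such instance emits $O(m)$ pairs, giving local map time and space $O(m)$. For the total space, I would sum over all nodes: $\sum_u \binom{|\Gamma^+(u)|}{2} \le \sum_u |\Gamma^+(u)|\cdot\sqrt{m} = \sqrt{m}\sum_u |\Gamma^+(u)| = \sqrt{m}\cdot m = m^{3/2}$, where the last equality uses $\sum_u|\Gamma^+(u)|=m$ (each edge is counted once by its $\prec$-smaller endpoint). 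This is the crucial calculation yielding the $O(m^{3/2})$ total-space bound. The reduce~2 instance with key $(x_i,x_j)$ receives the set $\Gamma^-(x_i)\cap\Gamma^-(x_j)$ together with the $\$$ marker; its local input is at most $\min(d(x_i),d(x_j))=O(m)$, and it simply filters, so the per-reducer cost is $O(m)$.

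Then I would treat Round~3, which governs the total work. Map~3 re-emits each edge $(x_i,x_j)$ once per node $u\in\Gamma^-(x_i)\cap\Gamma^-(x_j)$, i.e.\ once per subgraph $G^+(u)$ that contains it; summing over all edges reproduces $\sum_u |E(G^+(u))| = O(m^{3/2})$ total space, and since $|E(G^+(u))|\le\binom{|\Gamma^+(u)|}{2}=O(m)$, the per-instance space is $O(m)$. The reduce~3 instance with key $u$ receives the entire edge set of $G^+(u)$ (local space $O(m)$) and counts $(k-1)$-cliques in it. Here I would invoke the fact that $G^+(u)$ has at most $2\sqrt{m}$ vertices, so a brute-force enumeration of $(k-1)$-subsets costs $O\big((\sqrt m)^{\,k-1}\big)=O(m^{(k-1)/2})$ local time, establishing the local running-time bound; summing this over all nodes $u$ (and bounding the number of relevant nodes appropriately) gives total work $O(m^{k/2})$. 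The main obstacle is getting this last total-work bound tight rather than merely $O(n\cdot m^{(k-1)/2})$: I would argue it via $\sum_u |\Gamma^+(u)|^{k-1}\le (2\sqrt m)^{k-2}\sum_u|\Gamma^+(u)| = O(m^{(k-2)/2}\cdot m)=O(m^{k/2})$, again leaning on Theorem~\ref{le:high-neighborhood} to pull out $|\Gamma^+(u)|^{k-2}$ and on $\sum_u|\Gamma^+(u)|=m$ for the remaining factor. Finally I would combine the three rounds: total space $=O(m^{3/2})$, total work $=O(m^{k/2})$, and local space $=O(m)$ with local time $=O(m^{(k-1)/2})$, as claimed.
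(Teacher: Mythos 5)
Your proof follows essentially the same route as the paper's: the identical round-by-round accounting, with Theorem~\ref{le:high-neighborhood} used in exactly the same two places --- to bound the round-2 output by $\sum_{u}\binom{|\Gamma^+(u)|}{2}=O(\sqrt{m}\cdot\sum_u|\Gamma^+(u)|)=O(m^{3/2})$ for total space, and to pull out $|\Gamma^+(u)|^{k-2}\le(2\sqrt{m})^{k-2}$ in $\sum_u|\Gamma^+(u)|^{k-1}=O(m^{k/2})$ for total work, together with the per-instance bounds $O(m)$ and $O(m^{(k-1)/2})$ for reduce~3. The one thing you omit is the correctness half of the statement: the theorem also asserts that \sik\ actually \emph{counts} the $k$-cliques, which the paper dispatches in a single sentence --- each clique is counted exactly once, by the reduce~3 instance keyed by its $\prec$-minimum node --- so add that observation and your argument is complete.
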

\begin{proof}
The total space usage in round 1 is $O(m)$. Map 2 instances produce key-value pairs of constant size, whose total number is upper bounded by $\sum_{u\in V}{{|\Gamma^+(u)|}\choose{2}}$, which is at most $2\sqrt{m}\cdot \sum_{u\in V}|\Gamma^+(u)|=O(m^{3/2})$ by Theorem~\ref{le:high-neighborhood}. The data volume can only decrease after the execution of reduce 2 instances and is not affected by round 3. Hence, the total space usage is $O(m^{3/2})$.

We now consider local space. Map 1 instances use constant memory. By Theorem~\ref{le:high-neighborhood}, the input to any reduce 1 instance has size $O(\sqrt m)$. Similarly, any map 2 instance receives $O(\sqrt m)$ input edges and produces $O(m)$ key-value pairs. Consider a reduce 2 instance and let $(x,y)$ be its key. The input of this instance is $\Gamma^-(x)\cap\Gamma^-(y)\subseteq V$ (without any repetition), and its size is thus $O(n)$. In round 3, map and reduce instances use memory $O(n)$ and $O(m)$, respectively, which concludes the proof of the local space claim (recall that $n\le 2m$).

By similar arguments, the running time of map instances is $O(1)$, $O(m)$, and $O(n)$, respectively, in the three rounds. Reduce instances require time $O(\sqrt{m})$ and $O(n)$ in rounds 1 and 2, while reduce 3 instances run on graphs of at most $\sqrt{m}$ nodes and require $O(m^{(k-1)/2})$ time.
The total work of the algorithm is dominated by the costs of the reducers of the last phase, which is upper bounded by $O(\sum_{u\in V}|\Gamma^+(u)|^{k-1})$. By Theorem~\ref{le:high-neighborhood}, this is $O(m^{(k-2)/2}\sum_{u\in V}|\Gamma^+(u)| = O(m^{k/2})$.

Each clique is counted exactly once by the reducer associated to its minimum node (according to $\prec$), proving the correctness of the algorithm. 
\end{proof}

\subsection*{Approximate counting }


We now prove the results about space usage, correctness, and concentration of the plain pair sampling algorithm described in Section~\ref{se:approximate}. Let $\mathcal{Q} = \{Q_1, \ldots, Q_{q_k}\}$ be the set of $k$-cliques of $G$. Let $Q_i\in \mathcal{Q}$ be a clique and let $u$ be its smallest node (according to $\prec$). We say that $Q_i$ is  \emph{sampled} if all pairs in $Q_i\setminus \{u\}$ were emitted by the map 2 instance with key $u$. 

\begin{lemma}
Let $p\le 1$ be the edge sampling probability of algorithm \sik\ and let $\alpha$ be a constant in $[0,1)$ such that $1/m^\alpha \le p$. Then algorithm \sik\  with sampling probability $p$ uses local space $O(mp)$ with high probability.
\end{lemma}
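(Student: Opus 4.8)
The plan is to bound, with high probability, the local space of every instance whose size is affected by the sampling, and then take a union bound over all such instances. Sampling only touches rounds~2 and~3: the map~2 instances that emit sampled pairs, the reduce~2 and map~3 instances whose input is the sampled set of nodes needing a given edge, and the reduce~3 instances that rebuild the sampled subgraph $G^+(u)$. Round~1 is untouched; by Theorem~\ref{le:high-neighborhood} its instances use $O(\sqrt m)$ space, which is subsumed by $O(mp)$ whenever $p \ge 1/\sqrt m$, so it is never the binding term among the sampling-affected instances.

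First I would observe that each relevant count is a binomial random variable on at most $2m$ independent $\mathrm{Bernoulli}(p)$ trials. A reduce~3 instance keyed by $u$ stores $\mathrm{Bin}(E_u,p)$ edges, where $E_u=|E(G^+(u))|\le\binom{|\Gamma^+(u)|}{2}\le\binom{2\sqrt m}{2}\le 2m$ by Theorem~\ref{le:high-neighborhood}; a map~2 instance keyed by $u$ emits $\mathrm{Bin}(\binom{|\Gamma^+(u)|}{2},p)$ pairs, with the same ceiling on the number of trials; and a reduce~2 or map~3 instance keyed by $(x,y)$ receives $\mathrm{Bin}(t_{xy},p)$ items (up to the single $\$$ marker), where $t_{xy}=|\Gamma^-(x)\cap\Gamma^-(y)|\le n\le 2m$. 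Since each has at most $2m$ trials with common success probability $p$, it is stochastically dominated by $Z\sim\mathrm{Bin}(2m,p)$, whose mean is $\mu=2mp$. This uniform domination is what lets me avoid dealing separately with the many different means $pE_u$, $pt_{xy}$.

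Next I would apply the Chernoff bound of Theorem~\ref{le:chernoff} to $Z$ with $\varepsilon=1/2$, giving $\Pr\{Z>3mp\}\le\Pr\{|Z-\mu|>\mu/2\}\le 2e^{-mp/6}$, so any single instance exceeds local space $3mp=O(mp)$ with probability at most $2e^{-mp/6}$. To conclude I would union-bound over all instances: by Theorem~\ref{le:high-neighborhood} the total number of distinct high-neighbor pairs produced before sampling is $\sum_u\binom{|\Gamma^+(u)|}{2}=O(m^{3/2})$, which dominates the number of instances across the sampled rounds, so the probability that some instance uses more than $O(mp)$ space is at most $O(m^{3/2})\cdot 2e^{-mp/6}$.

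The hard part — and exactly where the hypothesis $1/m^\alpha\le p$ with $\alpha\in[0,1)$ enters — is forcing this product below $1/m$, as required by the definition of high probability. The lower bound on $p$ yields $mp\ge m^{1-\alpha}$ with $1-\alpha>0$, so the exponent $mp/6$ grows polynomially in $m$ and swamps the merely $O(\log m)$ contribution of the $O(m^{3/2})$ prefactor; hence $O(m^{3/2})\,e^{-mp/6}=o(1/m)$ for large $m$. I expect the only genuine subtlety to be this interaction between the per-instance tail and the polynomially many instances: the uniform domination by $\mathrm{Bin}(2m,p)$ keeps the per-instance bound clean, while the constraint $\alpha<1$ is precisely what guarantees the tail decays fast enough to survive the union bound, giving local space $O(mp)$ for every instance with high probability.
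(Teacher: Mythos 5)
Your proof is correct and takes essentially the same route as the paper's: a per-instance Chernoff bound on a binomial input size whose number of trials is capped at $O(m)$ via Theorem~\ref{le:high-neighborhood}, followed by a union bound over polynomially many instances, with the hypothesis $p\ge 1/m^\alpha$, $\alpha<1$, ensuring that $mp\ge m^{1-\alpha}$ makes the exponential tail swamp the polynomial prefactor. Your explicit stochastic domination by $\mathrm{Bin}(2m,p)$ and the choice $\varepsilon=1/2$ (which stays inside the stated range of Theorem~\ref{le:chernoff}) merely tidy up steps the paper glosses, and your single union bound over all $O(m^{3/2})$ sampling-affected instances subsumes the paper's ``similar arguments'' for the reduce 2 and map 3 cases.
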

\begin{proof}
We will prove the claim for reduce 3 instances; similar arguments can be used to prove the bounds for reduce 2 and map 3 instances, recalling that $n\le 2m$.

By Theorem~\ref{le:high-neighborhood} each subgraph $G^+(u)$ has at most $2\sqrt{m}$ nodes.
Since the reduce 3 instance with key $u$ receives an edge $(x,y)\in G^+(u)$ if and only if the map 2 instance with key $u$ sampled the pair $(x,y)$ (event that has probability $p$), we have that the expected input size of any reduce 2 instance is at most $2pm$.

Being each pair of high-neighbors of a node $u$ sampled independently by all the other, a simple application of the Chernoff bound allows to prove that  the probability of one reduce 3 instance to receive more than $4pm$ values is less than $e^{-2pm/3}$, and a union bound gives that the probability of any of the reduce 3 instances to receive more than $4pm$ values is bounded by $n/e^{2pm/3}$. Since $p\ge 1/m^\alpha$, for large enough $m$ this probability is smaller than $1/m$, which concludes the proof.
\end{proof}

\smallskip

\begin{claim}
\label{le:mean}
Algorithm \sik\ with edge sampling returns an estimate $\widetilde{q_k}$ with expected value $E[\widetilde{q_k}]=q_k$, where $q_k$ is the number of $k$-cliques in the input graph $G$.
\end{claim}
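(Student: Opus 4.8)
The plan is to compute the expectation of the estimator $\widetilde{q_k}$ directly by exploiting linearity of expectation over the set $\mathcal{Q}$ of all $k$-cliques. Recall that each reduce 3 instance with key $u$ reports $q_u \cdot c^{k-2}$ (color-based) or $q_u / p^{(k-1)(k-2)/2}$ (plain sampling), where $q_u$ is the number of $(k-1)$-cliques counted in the \emph{sampled} copy of $G^+(u)$, and the final estimate is $\widetilde{q_k} = \sum_u (\text{contribution of } u)$. The key observation is that a $k$-clique $Q_i$ with smallest node $u$ contributes to $q_u$ precisely when $Q_i$ is \emph{sampled}, i.e., when all $\binom{k-1}{2} = (k-1)(k-2)/2$ pairs among the nodes of $Q_i \setminus \{u\}$ are emitted by the map 2 instance with key $u$.

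First I would introduce, for each clique $Q_i \in \mathcal{Q}$, an indicator random variable $X_i$ that equals $1$ if $Q_i$ is sampled and $0$ otherwise. Since each of the $(k-1)(k-2)/2$ relevant pairs is emitted independently with probability $p$, we have $E[X_i] = \Pr[X_i = 1] = p^{(k-1)(k-2)/2}$. Next I would write the raw count (before rescaling) as $\sum_{i=1}^{q_k} X_i$, because each sampled clique is counted exactly once — by the reducer associated with its unique smallest node, exactly as in the correctness argument for the exact algorithm (Theorem~1). This is the step that ties the estimator back to the structure of the algorithm: the bijection between cliques and their responsible reducers guarantees there is no double counting and no omission among the sampled cliques.

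Then I would apply linearity of expectation and the rescaling factor. Since $\widetilde{q_k} = p^{-(k-1)(k-2)/2} \sum_{i=1}^{q_k} X_i$, we obtain
\begin{equation}
E[\widetilde{q_k}] = p^{-(k-1)(k-2)/2} \sum_{i=1}^{q_k} E[X_i] = p^{-(k-1)(k-2)/2} \cdot q_k \cdot p^{(k-1)(k-2)/2} = q_k,
\end{equation}
which is exactly the claim. The rescaling factor $p^{-(k-1)(k-2)/2}$ in the reduce 3 output is chosen precisely to cancel the sampling probability of a single clique, so the estimator is unbiased.

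I do not expect any serious obstacle here, as this is a routine unbiasedness computation; the only point requiring care is the independence of the $\binom{k-1}{2}$ pair-emission events defining a single $X_i$, which holds because map 2 samples each pair independently. I would flag that the $X_i$ across different cliques need \emph{not} be independent (cliques sharing edges are correlated), but this does not affect the expectation — correlations only matter for the variance/concentration argument handled separately in Theorem~\ref{th:concentrationEdgeSampling}. For completeness I would note the analogous computation for \sick, where $E[X_i] = 1/c^{k-2}$ (the probability that the $k-1$ nodes of $Q_i \setminus \{u\}$ receive a common color) and the rescaling factor $c^{k-2}$ again yields $E[\widetilde{q_k}] = q_k$.
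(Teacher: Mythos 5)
Your proposal is correct and matches the paper's own proof essentially step for step: the same indicator variables $X_i$ for sampled cliques, the same computation $E[X_i]=p^{(k-1)(k-2)/2}$ from the independence of the $\binom{k-1}{2}$ pair emissions, and the same application of linearity of expectation followed by cancellation with the rescaling factor. Your added remarks on cross-clique correlations and the \sick\ analogue are sound but go beyond what the paper's proof of this claim contains.
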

\begin{proof}
Let $Q_i\in \mathcal{Q}$ be a clique in $G$ and let $u$ be its smallest node (according to $\prec$).
Clique $Q_i$ contributes to the estimate of the number of cliques in $G$ if the map 2 instance handling input $\langle u; \Gamma^+(u)\rangle$ emits all pairs of nodes in $Q_i \setminus {u}$, i.e., if it is sampled. Let $X_i$ be the random variable indicating the event ``the clique $Q_i$ is sampled''. Since each pair is sampled by the algorithm independently with probability $p$ and there are ${k-1}\choose 2$ distinct (unordered) pairs in a set of $k-1$ elements, $Pr\{X_i=1\}=p^{(k-1)(k-2)/2}$, hence $E[X_i]=p^{(k-1)(k-2)/2}$. 
Now we can define $X=\sum_{i=1}^{q_k} X_i$, and by linearity of expectation we have that $E[X]=q_k p^{(k-1)(k-2)/2}$. Since $\widetilde{q_k}=X / p^{(k-1)(k-2)/2}$ the claim follows.
\end{proof}

\smallskip

\begin{theorem}
Let $G$ be a graph with $m$ edges and $q_k$ $k$-cliques. Let $\widetilde{q_k}$ be the estimate returned by algorithm \sik\ with edge sampling probability $p$.
For any constant $\varepsilon >0$, there exists a constant $h>0$ such that  $|\widetilde{q_k}-q_k| \le \varepsilon q_k$ with high probability if $p^{(k-1)(k-2)/2} >\frac{h m^{(k-3)/2} \ln m}{\varepsilon^2 q_k}$.
\end{theorem}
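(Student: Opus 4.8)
The plan is to reduce the concentration question to a sum of \emph{independent} Bernoulli variables to which the Chernoff bound (Theorem~\ref{le:chernoff}) applies. Recall from Claim~\ref{le:mean} that $\widetilde{q_k}=X/p^{(k-1)(k-2)/2}$, where $X=\sum_{i=1}^{q_k}X_i$ and $X_i$ is the indicator that clique $Q_i$ is sampled, so $E[X]=\mu:=q_k\,p^{(k-1)(k-2)/2}$. The obstacle is that the $X_i$ are \emph{not} mutually independent: if $Q_i$ and $Q_j$ have the same smallest node $u$ and their high-neighbor parts $Q_i\setminus\{u\}$ and $Q_j\setminus\{u\}$ share a pair, then $X_i$ and $X_j$ both depend on the same sampling coin tossed in $G^+(u)$. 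Crucially, two cliques with \emph{distinct} smallest nodes are always independent, because a physical edge belonging to two subgraphs $G^+(u)$ and $G^+(u')$ is sampled with independent coins. First I would make this dependency structure precise.

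I would then encode it as an \emph{interference graph} $H$ whose vertices are the $q_k$ cliques and whose edges join pairs of cliques whose indicators are dependent, i.e., cliques that share a smallest node $u$ and at least two high-neighbors of $u$. The key estimate is a bound on $\Delta(H)$. Fixing $Q_i$ with smallest node $u$, any neighbor in $H$ must contain one of the $\binom{k-1}{2}$ pairs of $Q_i\setminus\{u\}$ and is completed by choosing the remaining $k-3$ nodes from $\Gamma^+(u)$; by Theorem~\ref{le:high-neighborhood}, $|\Gamma^+(u)|\le 2\sqrt m$, so the number of completions is $O\big((\sqrt m)^{k-3}\big)$. Since $k$ is constant this yields $\Delta(H)=O(m^{(k-3)/2})$ (and $\Delta(H)=0$ for $k=3$, recovering full independence there).

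Next I would invoke the Hajnal--Szemer\'edi theorem (Theorem~\ref{le:hajnal-szemeredi}) to properly color $H$ with $\chi=\Delta(H)+1=O(m^{(k-3)/2})$ colors in a balanced way, so that every color class is an independent set of cliques of size $\Omega(q_k/\chi)$. Within a single class $C_\ell$ the indicators are mutually independent, so $X^{(\ell)}=\sum_{Q_i\in C_\ell}X_i$ has mean $\mu_\ell=|C_\ell|\,p^{(k-1)(k-2)/2}=\Omega(\mu/\chi)$ and Theorem~\ref{le:chernoff} gives $\Pr\{|X^{(\ell)}-\mu_\ell|>\varepsilon\mu_\ell\}\le 2e^{-\varepsilon^2\mu_\ell/3}$ (for $\varepsilon\in(0,1)$; the case $\varepsilon\ge 1$ follows by applying the result with $\varepsilon'=1/2$). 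A union bound over the $\chi$ classes shows that, simultaneously for all $\ell$, $|X^{(\ell)}-\mu_\ell|\le\varepsilon\mu_\ell$, hence $|X-\mu|\le\varepsilon\mu$ and $|\widetilde{q_k}-q_k|\le\varepsilon q_k$, with failure probability at most $2\chi\,e^{-\varepsilon^2\mu_{\min}/3}$.

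Finally I would turn the requirement ``failure probability $\le 1/m$'' into the hypothesis on $p$. This needs $\varepsilon^2\mu_{\min}/3\ge\ln(2m\chi)$; since $\chi$ is polynomial in $m$ we have $\ln(2m\chi)=O(\ln m)$, while $\mu_{\min}=\Omega\big(q_k\,p^{(k-1)(k-2)/2}/m^{(k-3)/2}\big)$. Solving for $p$ produces exactly $p^{(k-1)(k-2)/2}>\frac{h\,m^{(k-3)/2}\ln m}{\varepsilon^2 q_k}$ for a suitable constant $h$ (depending only on $k$), as claimed. I expect the main obstacle to be the degree bound on $H$ in the second step: getting the exponent $m^{(k-3)/2}$ tight is what makes the final condition match, and it is the place where the ``independent coin per subgraph'' property is essential, since it rules out the much denser dependencies that would arise if edges were sampled once globally.
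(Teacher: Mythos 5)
Your proposal is correct and follows essentially the same route as the paper's own proof: the identical interference graph $H$ (dependence only between $k$-cliques sharing their smallest node and at least one edge), the same $O(m^{(k-3)/2})$ bound on $\Delta(H)$ via Theorem~\ref{le:high-neighborhood}, a balanced Hajnal--Szemer\'edi coloring (Theorem~\ref{le:hajnal-szemeredi}), a Chernoff bound within each color class, and a union bound over the $O(m^{(k-3)/2})$ classes yielding exactly the stated condition on $p$. The details you add --- the explicit $\binom{k-1}{2}\cdot O\bigl((\sqrt m)^{\,k-3}\bigr)$ degree count, the reduction of the case $\varepsilon\ge 1$ to $\varepsilon'=1/2$, and the remark that independent coins across subgraphs $G^+(u)$, $G^+(u')$ are what keep $\Delta(H)$ small --- merely make explicit steps the paper leaves implicit.
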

\begin{proof}
We proved that the expected value of $\widetilde{q_k}$ is $q_k$ in Claim~\ref{le:mean}; we will now deal with the concentration of $\widetilde{q_k}$ around its mean.
Let $H$ be the graph defined as follows:
\begin{itemize}[leftmargin=0.4cm,rightmargin=0cm]
\itemsep0em
\item The node set of $H$ is the set of $k$-cliques $\mathcal{Q}$ of $G$.
\item Let $Q_i$ and $Q_j$ be two $k$-cliques with the same smallest node $u$: there is an edge between $Q_i$ and $Q_j$ in $H$ if and only if $Q_i \setminus \{u\}$ and $Q_j\setminus \{u\}$ have at least one common edge.
\end{itemize}
Cliques that are adjacent in $H$ must thus share at least three of their $k$ nodes, including the smallest node. Considering that graphs $G^+(u)$ have at most $\sqrt{m}$ nodes, we have that the maximum degree of a node in $H$ is therefore $O(m^{(k-3)/2})$. 

Theorem~\ref{le:hajnal-szemeredi} by Hajnal and Szemeredi implies that there exists a node coloring of $H$, using $C\in O(m^{(k-3)/2})$ colors, such that each monochromatic set of nodes has size $\Theta(q_k/m^{(k-3)/2})$, since $H$ has $q_k$ nodes.

For each $j\in [1,q_k]$, let $X_j$ be the indicator variable that is 1 when $Q_j$ is sampled. Let $S_1, \ldots, S_{C}$ be the sets of monochromatic nodes in $H$ and, for each $i\in [1,C]$, let 
\begin{equation}
\label{eq:sum-indicators}
X_{S_i}=\sum_{j: Q_j\in S_i}X_j
\end{equation} 
The terms of  $X_{S_i}$ are independent. Hence, we can apply to $X_{S_i}$ the Chernoff bound as in Theorem~\ref{le:chernoff} obtaining:
$$Pr\{|X_{S_i} - \mu_i| > \varepsilon \mu_i \} \le 2e^{-\mu_i\varepsilon^2/3}\,.$$
where $\mu_i=E[X_{S_i}]$. By Equation~\ref{eq:sum-indicators}, for each set $S_i$ we have:
$$\mu_i \in \Theta\left(\frac{p^{(k-1)(k-2)/2} q_k}{m^{(k-3)/2}}\right)$$ since $E[X_j]=p^{(k-1)(k-2)/2}$, as shown in the proof of Claim~\ref{le:mean}. By the same proof,  $\mu = \sum_{i=1}^C \mu_i = p^{(k-1)(k-2)/2} q_k$. 
 
If we define $X=\sum_{i=1}^C X_{S_i}$ and we apply the union bound we can conclude that
$$Pr\{|X- \mu|>\varepsilon \mu\} \le c_1 m^{\frac{k-3}{2}} e^{-\frac{c_2 \varepsilon^2 q_k p^{(k-1)(k-2)/2}}{m^{(k-3)/2}}}$$ by appropriately choosing constants $c_1$ and $c_2$.
By imposing $$c_1 m^{\frac{k-3}{2}} e^{- \frac{c_2 \varepsilon^2 q_k p^{(k-1)(k-2)/2}}{m^{(k-3)/2}}}\le\frac{1}{m}$$ the claim follows with standard algebraic calculations. 
\end{proof}

The analysis above can be naturally extended to algorithm \sick. Lemma~\ref{le:sampling-space} holds for algorithm \sick\ just by using $p=1/c$. The estimate returned by algorithm \sick\ has expected value $q_k$, and this can be proved using arguments similar to those in the proof of Claim~\ref{le:mean}.
The arguments of the proof of Theorem~\ref{th:concentrationEdgeSampling} can also be used to prove concentration around the mean for algorithm \sick, considering that correlation of sampled $k$-cliques arises as soon as the cliques share a node besides the minimum node, instead of an edge. Hence, concentration is achieved with high probability under the conditions expressed by Theorem~\ref{th:concentrationColorSampling}.

We observed in Section~\ref{se:approximate} that, for $k=3$, our concentration results require weaker conditions than the triangle counting algorithm from~\cite{PT12}. This is due to the fact that we color the same node $x\in \Gamma^+(u)\cap \Gamma^+(v)$ independently for $u$ and $v$, which allows us to reduce the maximum degree of the interference graph $H$ in the application of the Hajnal-Szemeredi theorem.

\newpage
\section*{Appendix C: detailed benchmark statistics and number of cliques}

\begin{center}

\vspace{0.5cm}
{\centering{
\scalebox{0.90}{\rotatebox{90}{
                \begin{minipage}{22.5cm}
\begin{tabular}{|c|crrrrrr|}\hline
& \multicolumn{1}{c}{$n$ ($=q_1$)} & \multicolumn{1}{c}{$m$ ($=q_2$)} & \multicolumn{1}{c}{$q_3$} & \multicolumn{1}{c}{$q_4$} & \multicolumn{1}{c}{$q_5$} & \multicolumn{1}{c}{$q_6$} & \multicolumn{1}{c|}{$q_7$} \\\hline\hline
\texttt{citPat} & 3\,774\,768 & 16\,518\,947 & 7\,515\,023 & 3\,501\,071 & 3\,039\,636 & 3\,151\,595 & 1\,874\,488\\
 (264.0 MB)  &  & ($4.38\times$) & ($0.45\times$) & ($0.47\times$) & ($0.87\times$) & ($1.04\times$) & ($0.6\times$) \\\hline
\texttt{youTube}  & 1\,134\,890 & 2\,987\,624 & 3\,056\,386 & 4\,986\,965 & 7\,211\,947 & 8\,443\,803 & 7\,959\,704 \\
(38.7 MB)  &  & ($2.63\times$) & ($1.02\times$) & ($1.63\times$) & ($1.44\times$) & ($1.17\times$) & ($0.94\times$) \\\hline
\texttt{locGowalla} & 196\,591 & 950\,327 & 2\,273\,138 & 6\,086\,852 & 14\,570\,875 & 28\,928\,240 & 47\,630\,720\\
(11.1 MB)  &  & ($4.83\times$) & ($2.39\times$) & ($2.67\times$) & ($2.39\times$) & ($1.98\times$) & ($1.65\times$) \\\hline
\texttt{socPokec} & 1\,632\,803  &22\,301\,964  & 32\,557\,458 & 42\,947\,031 & 52\,831\,618 & 65\,281\,896 &  83\,896\,509\\
(309.1 MB)  &  & ($13.65\times$) & ($1.46\times$) & ($1.32\times$) & ($1.23\times$) & ($1.18\times$) & ($1.28\times$) \\\hline
\texttt{webGoogle} & 875\,713 & 4\,322\,051 & 13\,391\,903 & 39\,881\,472 & 105\,110\,267 & 252\,967\,829& 605\,470\,026\\
(59.5 MB)  &  & ($4.93\times$) & ($3.1\times$) & ($2.98\times$) & ($2.63\times$) & ($2.40\times$) & ($2.39\times$) \\\hline
\texttt{webStan}  & 281\,903 & 1\,992\,636 & 11\,329\,473 & 78\,757\,781 & 620\,210\,972 & 4\,859\,571\,082 & 34\,690\,796\,481\\
(26.4 MB)  &  & ($7.07\times$) & ($5.68\times$) & ($6.95\times$) & ($7.87\times$) & ($7.83\times$) & ($7.13\times$) \\\hline
\texttt{asSkit}  & 1\,696\,415 & 11\,095\,298 & 28\,769\,868 & 148\,834\,439 & 1\,183\,885\,507 & 9\,759\,000\,981 & 73\,142\,566\,591 \\
 (149.1 MB)  &  & ($6.54\times$) & ($2.59\times$) & ($5.17\times$) & ($7.95\times$) & ($8.24\times$) & ($7.49\times$) \\\hline
\texttt{orkut}  & 3\,072\,441 & 117\,185\,083 & 627\,584\,181 & 3\,221\,946\,137 & 15\,766\,607\,860& 75\,249\,427\,585&353\,962\,921\,685 \\
(1\,687.8 MB)  &  & ($38.14\times$) & ($5.35\times$) & ($5.13\times$) & ($4.89\times$) & ($4.77\times$) & ($4.70\times$) \\\hline
\texttt{webBerkStan}  & 685\,230 & 6\,649\,470 & 64\,690\,980 & 1\,065\,796\,916 & 21\,870\,178\,738 & 460\,155\,286\,971& 9\,398\,610\,960\,254\\
(89.4 MB)  &  & ($9.70\times$) & ($9.73\times$) & ($16.48\times$) & ($20.52\times$) & ($21.04\times$) & ($20.42\times$) \\\hline
\texttt{comLiveJ}  & 3\,997\,962 & 34\,681\,189 & 177\,820\,130 & 5\,216\,918\,441 & 246\,378\,629\,120& 10\,990\,740\,312\,954& {\it 445\,377\,238\,737\,777} \\
(501.6 MB)  &  & ($8.67\times$) & ($5.12\times$) & ($29.34\times$) & ($47.22\times$) & ($44.6\times$) & {\it(40.52$\times$)} \\\hline
\texttt{socLiveJ1}  & 4\,847\,571 & 42\,851\,237 & 285\,730\,264 & 9\,933\,532\,019 &467\,429\,836\,174 & 20\,703\,476\,954\,640 & {\it 849\,206\,163\,678\,934}\\
(627.7 MB)  &  & ($8.84\times$) & ($6.67\times$) & ($34.7\times$) & ($47.05\times$) & ($44.29\times$) & {\it(41.01$\times$)} \\\hline
\texttt{egoGplus} & 107\,614 & 12\,238\,285 & 1\,073\,677\,742 & 78\,398\,980\,887 & 4\,727\,009\,242\,306 & {\it 242\,781\,609\,271\,577}& {\it 11\,381\,161\,386\,691\,540}\\
(538.5 MB)  &  & ($113.72\times$) & ($87.73\times$) & ($73.01\times$) & ($60.29\times$) & {\it (51.36$\times$)} & {\it(46.87$\times$)} 
\\\hline\hline
\end{tabular}
\\~\\~\\
Benchmark statistics: number $n$ of nodes, number $m$ of edges, numbers of cliques on $3$, $4$, $5$, $6$, and $7$ nodes (clique numbers in {\it italic} are approximations obtained by our color-based sampling algorithm, using 10 colors). Benchmarks are sorted by increasing $q_7$. For each benchmark, we also report in parentheses the storage in MB with no compression and the ratio $q_{k+1}/q_k$ (which is half the average node degree for $k=1$). Notice the large values of these ratios for benchmarks \texttt{socLiveJ1}, \texttt{comLiveJ}, \texttt{webBerkStan}, and \texttt{egoGplus}. 
\end{minipage}}}
}}

\end{center}

\end{document}